\documentclass[a4paper,10pt,twoside]{article}

\usepackage{dcolumn}%
\usepackage{a4wide}%
\usepackage{bm}%
\usepackage{amsmath,amssymb,amsfonts,amsthm, MnSymbol}
\usepackage{moreverb,rotating,graphics}
\usepackage[T1]{fontenc}
\usepackage[utf8]{inputenc}
\usepackage{epsfig}
\usepackage[francais,english]{babel} 
\usepackage{tikz}
\usepackage{mathrsfs}

\newtheorem{theorem}{Theorem}

\newtheorem{remark}{Remark}
\newtheorem{lemma}{Lemma}
\newtheorem{corollary}{Corollary}


\def\C{{\mathbb C}}

\def\N{{\mathbb N}}

\def\R{{\mathbb R}}


\def\bj{{\bold j}}

\def\bm{{\bold m}}

\def\br{{\bold r}}

\def\bv{{\bold v}}

\def\bw{{\bold w}}

\def\bx{{\bold x}}

\def\bz{{\bold z}}

\def\bsigma{{\bold \sigma}}

\def\bnull{{\bold 0}}


\def\rd{{\mathrm{d}}}
\def\re{{\mathrm{e}}}
\def\ri{{\mathrm{i}}}

\def\Re{{\mathrm{Re }\, }}
\def\Im{{\mathrm{Im }\, }}

\def\tr{{\rm tr}}


\def\cC{{\mathcal C}}

\def\cF{{\mathcal F}}
\def\cG{{\mathcal G}}

\def\cI{{\mathcal I}}
\def\cJ{{\mathcal J}}

\def\cM{{\mathcal M}}

\def\cS{{\mathcal S}}

\def\cW{{\mathcal W}}


\newcommand{\sS}{\mathscr{S}}

\newcommand{\bra}{\langle}
\newcommand{\ket}{\rangle}

\def\spinup{{\uparrow}}
\def\spindown{{\downarrow}}

\newcommand{\curl}{\textbf{curl}} 

\newcommand{\slater}{\mathrm{Slater}}
\newcommand{\mixed}{\mathrm{mixed}}
\newcommand{\pure}{\mathrm{pure}}


\begin{document}

\title{Pure-state $N$-representability in current-spin-density-functional theory}
\author{David Gontier~\\
Universit\'e Paris Est, CERMICS (ENPC), INRIA, F-77455 Marne-la-Vall\'ee}

\maketitle

\begin{abstract}
	This paper is concerned with the pure-state $N$-representability problem for systems under a magnetic field. Necessary and sufficient conditions are given for a spin-density $2 \times 2$ matrix $R$ to be representable by a Slater determinant. We also provide sufficient conditions on the paramagnetic current $\bj$ for the pair $(R, \bj)$ to be Slater-representable in the case where the number of electrons $N$ is greater than 12. The case $N < 12$ is left open.
\end{abstract}

\section{Introduction}

The density-functional theory (DFT), first developed by Hohenberg and Kohn~\cite{Hohenberg1964}, then further developed and formalized mathematically by Levy~\cite{Levy1979}, Valone~\cite{Valone1980} and Lieb~\cite{Lieb1983}, states that the ground state energy and density of a non-magnetic electronic system can be obtained by minimizing some functional of the density only, over the set of all admissible densities. Characterizing this set is called the \textit{N-representability problem}. More precisely, as the so-called constrained search method leading to DFT can be performed either with $N$-electron wave functions~\cite{Levy1979, Lieb1983}, or with $N$-body density matrices~\cite{Valone1980,Lieb1983}, the $N$-representability problem can be recast as follows: \textit{What is the set of electronic densities that come from an admissible $N$-electron wave function?} (pure-state $N$-representability) and \textit{What is the set of electronic densities that come from an admissible $N$-electron density matrix?} (mixed-state $N$-representability) This question was answered by Gilbert~\cite{Gilbert1975}, Harriman~\cite{Harriman1981} and Lieb~\cite{Lieb1983} (see also Remark~\ref{rem:rep_rho}).\\

For a system subjected to a magnetic field, the energy of the ground state can be obtained by a minimization over the set of pairs $(R,{\bold j})$, where $R$ denotes the $2 \times 2$ spin-density matrix~\cite{Gontier2013} (from which we recover the standard electronic density $\rho$ and the spin angular momentum density $\bold m$) and $\bold j$ the paramagnetic current~\cite{Vignale1988}. This has lead to several density-based theories, that come from several different approximations. In spin-density-functional theory (SDFT), one is only interested in spin effects, hence the paramagnetic term is neglected. The SDFT energy functional of the system therefore only depends on the spin-density $R$. The $N$-representability problem in SDFT are therefore: \textit{What is the set of spin-densities that come from an admissible $N$-electron wave function?} (pure-state representability) and \textit{What is the set of spin-densities that come from an admissible $N$-body density matrices?} (mixed-state representability). This question was left open in the pioneering work by von Barth and Hedin~\cite{Barth1972}, and was answered recently in the mixed-case setting~\cite{Gontier2013}. In parallel, in current-density-functional theory (CDFT), one is only interested in magnetic orbital effects, and spin effects are neglected~\cite{Vignale1987}. In this case, the CDFT energy functional of the system only depends on $\rho$ and $\bj$, and we need a characterization of the set of pure-state and mixed-state $N$-representable pairs $(\rho, \bj)$. Such a characterization was given recently by Hellgren, Kvaal and Helgaker in the mixed-state setting~\cite{Tellgren2014}, and by Lieb and Schrader in the pure-state setting, when the number of electrons is greater than 4~\cite{Lieb2013}. In the latter article, the authors rely on the so-called Lazarev-Lieb orthogonalization process~\cite{Lazarev2014} (see also Lemma~\ref{lem:HR}) in order to orthogonalize the Slater orbitals. \\

The purpose of this article is to give an answer to the $N$-representability problem in the current-spin-density-functional theory (CSDFT): \textit{What is the set of pairs $(R, \bj)$ that come from an admissible $N$-electron wave-function?} (pure-state) and  \textit{What is the set of pairs $(R, \bj)$ that come from an admissible $N$-body density-matrix?} (mixed-state). We will answer the question in the mixed-state setting for all $N \in \N^*$, and in the pure-state setting when $N \ge 12$ by combining the results in \cite{Gontier2013} and in \cite{Lieb2013}. In the process, we will answer the $N$-representability problem for SDFT for all $N \in \N^*$ in the pure-state setting. The proof relies on the Lazarev-Lieb orthogonalization process. In particular, our method does not give an upper-bound for the kinetic energy of the wave-function in terms of the previous quantities (we refer to \cite{Lazarev2014,Rutherfoord2013} for more details). We leave open the case $N < 12$ for pure-state CSDFT representability. \\

The article is organized as follows. In Section \ref{sec:states}, we recall briefly what are the sets of interest. We present our main results in Section~\ref{sec:results}, the proofs of which are given in Section~\ref{sec:proofs}.



\section{The different Slater-state, pure-state and mixed-state sets}
\label{sec:states}

We recall in this section the definition of Slater-states, pure-states and mixed-states. We denote by $L^p(\R^3)$, $H^1(\R^3)$, $C^\infty(\R^3)$, ... the spaces of \textit{real-valued} $L^p$, $H^1$, $C^\infty$, ... functions on $\R^3$, and by $L^p(\R^3, \C^d)$, $H^1(\R^3, \C^d)$, $C^\infty(\R^3, \C^d)$, ... the spaces of \textit{$\C^d$-valued} $L^p$, $H^1$, $C^\infty$ functions on $\R^3$. We will also make the identification $L^p(\R^3, \C^d)\equiv (L^p(\R^3, \C))^d$ (and so on). The one-electron state space is
\[
	L^2(\R^3, \C^2) \equiv \left\{ \Phi = (\phi^\spinup, \phi^\spindown)^T, \ \| \Phi \|_{L^2} := \int_{\R^3} | \phi^\spinup |^2 + | \phi^\spindown|^2 < \infty \right\},
\]
endowed with the natural scalar product $\bra \Phi_1 | \Phi_2 \ket := \int_{\R^3} \left( \overline{\phi_1^\spinup} \phi_2^\spinup + \overline{\phi_1^\spindown} \phi_2^\spindown  \right)$. The Hilbert space for $N$-electrons is the fermionic space $\bigwedge_{i=1}^N L^2(\R^3, \C^2)$ which is the set of wave-functions $\Psi \in L^2((\R^3,\C^2)^N)$ satisfying the Pauli-principle: for all permutations $p$ of $\{1, \ldots, N\}$,
\[
	\Psi(\br_{p(1)}, s_{p(1)}, \ldots, \br_{p(N)}, s_{p(N)}) = \varepsilon(p) \Psi(\br_1, s_1,  \ldots, \br_N, s_N),
\]
where $\varepsilon(p)$ denotes the parity of the permutation $p$, $\br_k \in \R^3$ the position of the $k$-th electron, and $s_k \in \{ \spinup, \spindown\}$ its spin. The set of admissible wave-functions, also called the set of pure-states, is the set of normalized wave-function with finite kinetic energy
\[
	\cW_N^\pure := \left\{ \Psi \in \bigwedge_{i=1}^N L^2(\R^3, \C^2), \| \nabla \Psi \|_{L^2}^2 < \infty, \| \Psi \|_{L^2(\R^{3N})}^2 = 1 \right\}
\]
where $\nabla$ is the gradient with respect to the $3N$ position variables. A special case of wave-functions is given by Slater determinants: let $\Phi_1, \Phi_2, \ldots, \Phi_N$ be a set of orthonormal functions in $L^2(\R^3, \C^2)$, the Slater determinant generated by $(\Phi_1, \ldots, \Phi_N)$ is (we denote by $\bx_k := (\br_k, s_k)$ the $k$-th spatial-spin component)
\[
	\sS \left[\Phi_1, \ldots, \Phi_N \right] (\bx_1, \ldots, \bx_N) := \dfrac{1}{\sqrt{N!}} \det \left( \Phi_i (\bx_j) \right)_{1 \le i,j \le N}.
\]
The subset of $\cW_N^\pure$ consisting of all finite energy Slater determinants is noted $\cW_N^\slater$. It holds that $\cW_1^\slater = \cW_1^\pure$, and, $\cW_N^\slater \subsetneq \cW_N^\pure$ for $N \ge 2$. \\


For a wave-function $\Psi \in \cW_N^\pure$, we define the corresponding $N$-body density matrix $\Gamma_\Psi = | \Psi \ket \bra \Psi |$, which corresponds to the projection on $\{ \C \Psi \}$ in $\bigwedge_{i=1}^N L^2(\R^3, \C^2)$. The set of pure-state (resp. Slater-state) $N$-body density matrices is
\[
	G_N^\pure := \left\{ \Gamma_\Psi,  \Psi \in \cW_N^\pure \right\} \ \text{resp.} \ G_N^\slater := \left\{ \Gamma_\Psi, \Psi \in \cW_N^\slater \right\}.
\]
It holds that $G_1^\slater = G_1^\pure$ and that $G_N^\slater \subsetneq G_N^\pure$ for $N \ge 2$. The set of mixed-state $N$-body density matrices $G_N^\mixed$ is defined as the convex hull of $G_N^\pure$:
\[
	G_N^\mixed = \left\{ \sum_{k=1}^\infty n_k | \Psi_k \ket \Psi_k | ,  0 \le n_k \le 1,  \sum_{k=1}^{\infty} n_k = 1,  \Psi_k \in \cW_N^\pure \right\}.
\]
It is also the convex hull of $G_N^\slater$. The kernel of an operator $\Gamma \in G_N^\mixed$ will be denoted by
\[
	\Gamma(\br_1, s_1, \ldots, \br_N, s_N; \br_1', s_1', \ldots, \br_N', s_N').
\]


The quantities of interest in density-functional theory are the spin-density $2 \times 2$ matrix, and the paramagnetic-current. For $\Gamma \in G_N^\mixed$, the associated spin-density $2 \times 2$ matrix is the $2 \times 2$ hermitian function-valued matrix
\[
	R_\Gamma (\br) := \begin{pmatrix} \rho^{\spinup \spinup}_\Gamma & \rho^{\spinup \spindown}_\Gamma \\  \rho^{\spindown \spinup}_\Gamma &  \rho^{\spindown \spindown}_\Gamma \end{pmatrix} (\br),
\]
where, for $\alpha,\beta \in \{ \spinup, \spindown \}^2$,
\[
	\rho^{\alpha \beta}_\Gamma (\br) := N \sum_{\vec{s} \in \{ \spinup, \spindown\}^{(N-1)}} \int_{\R^{3(N-1)}} 
	\Gamma(\br, \alpha, \vec{\bz}, \vec{s}; \br, \beta, \vec{\bz}, \vec{s}) \ \rd \vec{\bz}.
\]
In the case where $\Gamma$ comes from a Slater determinant $\sS[\Phi_1, \ldots, \Phi_N]$, we get
\begin{equation} \label{eq:DM_Slater}
	R_\Gamma(\br) = \sum_{k=1}^N \begin{pmatrix} | \phi^\spinup_k |^2 & \phi^\spinup_k \overline{\phi^\spindown_k} \\ \overline{\phi^\spinup_k} \phi^\spindown_k & | \phi^\spindown_k |^2 \end{pmatrix}.
\end{equation}
The total electronic density is $\rho_\Gamma = \rho_\Gamma^{\spinup \spinup} + \rho_\Gamma^{\spindown \spindown}$, and the spin angular momentum density is $\bm_\Gamma = \tr_{\C^2} [ \bsigma R_\Gamma]$, where
\[
	\bsigma := \left( \sigma_{x}, \sigma_{y}, \sigma_{z} \right) = \left( \begin{pmatrix} 0 & 1 \\ 1 & 0 \end{pmatrix} , \begin{pmatrix} 0 & -\ri \\ \ri & 0 \end{pmatrix}, \begin{pmatrix}  1 & 0 \\ 0 & -1 \end{pmatrix}     \right)
\]
contains the Pauli-matrices. Note that the pair $(\rho_\Gamma, \bm_\Gamma)$ contains the same information as $R_\Gamma$, hence the $N$-representability problem for the matrix $R$ is the same as the one for the pair $(\rho, \bm)$. However, as noticed in \cite{Gontier2013}, it is more natural mathematicaly speaking to work with $R_\Gamma$. The Slater-state, pure-state and mixed-state sets of spin-density $2 \times 2$ matrices are respectively defined by
\begin{align*}
	\cJ_N^\slater & := \left\{ R_\Gamma, \ \Gamma \in G_N^\slater \right\}, \\
	\cJ_N^\pure & := \left\{ R_\Gamma, \ \Gamma \in G_N^\pure \right\}, \\
	\cJ_N^\mixed & := \left\{ R_\Gamma, \ \Gamma \in G_N^\mixed \right\}.
\end{align*}
Since the map $\Gamma \mapsto R_\Gamma$ is linear, it holds that $\cJ_N^\slater \subset \cJ_N^\pure \subset \cJ_N^\mixed$, that $\cJ_N^\mixed$ is convex, and is the convex hull of both $\cJ_N^\slater$ and $\cJ_N^\pure$.\\

For a $N$-body density matrix $\Gamma \in G_N^\mixed$, we define the associated paramagnetic current $\bj_\Gamma  = \bj_\Gamma^\spinup + \bj_\Gamma^\spindown$ with
\[
	\bj_\Gamma^\alpha = \Im \left( N \hspace{-2ex} \sum_{\vec{s} \in \{ \spinup, \spindown\}^{N-1}} \int_{\R^{3(N-1)}}  \nabla_{\br'} \Gamma(\br, \alpha, \vec{\bz}, \vec{s} ; \br', \alpha, \vec{\bz}, \vec{s}) \Big|_{\br' = \br} \rd \vec{\bz} \right).
\]
In the case where $\Gamma$ comes from a Slater determinant $\cS[\Phi_1, \ldots, \Phi_N]$, we get
\begin{equation} \label{eq:j_Slater}
	\bj_{\Gamma} = \sum_{k=1}^N \Im \left( \overline{\phi_k^\spinup} \nabla \phi_k^\spinup + \overline{\phi_k^\spindown} \nabla \phi_k^\spindown \right).
\end{equation}
Note that while only the total paramagnetic current $\bj$ appears in the theory of C(S)DFT, the pair $(\bj^\spinup, \bj^\spindown)$ is sometimes used to design accurate current-density functionals (see \cite{Vignale1988} for instance). In this article however, we will only focus on the representability of $\bj$.\\


\section{Main results}
\label{sec:results}

\subsection{Representability in SDFT}

Our first result concerns the characterization of $\cJ_N^\slater$, $\cJ_N^\pure$ and $\cJ_N^\mixed$. For this purpose, we introduce
\begin{equation} \label{eq:CN}
\begin{aligned}
	\cC_N  := & \Big\{ R \in \cM_{2 \times 2} (L^1(\R^3, \C)), \ R^* = R, \quad R \ge 0,  \\ 
		&   \int_{\R^3} \tr_{\C^2} \left[ R \right] = N, \quad \sqrt{R} \in \cM_{2 \times 2} (H^1(\R^3, \C)) \Big\},
\end{aligned}
\end{equation}
and $\cC_N^0 := \left\{ R \in \cC_N, \ \det R \equiv 0 \right\}$. The following characterization of $\cC_N$ was proved in~\cite{Gontier2013}.
\begin{lemma} \label{lem:CN}
A function-valued matrix $R = \begin{pmatrix} \rho^\spinup & \sigma \\ \overline{\sigma} & \rho^\spindown \end{pmatrix}$ is in $\cC_N$ iff its coefficients satisfy
\begin{equation} \label{eq:conditions}
	\left\{ \begin{aligned}
	& \rho^{\spinup/\spindown} \ge 0,\quad \rho^\spinup \rho^\spindown - | \sigma |^2 \ge 0, \quad \int \rho^\spinup + \int \rho^\spindown = N, \\
	& \sqrt{\rho^{\spinup/ \spindown}} \in  H^1(\R^3), \quad \sigma, \sqrt{\det (R)} \in W^{1, 3/2}(\R^3), \\
	& | \nabla \sigma |^2 \rho^{-1}  \in L^1(\R^3), \\
	& \left| \nabla \sqrt{\det(R)} \right|^2 \rho^{-1} \in  L^1(\R^3) .
	\end{aligned} \right.
\end{equation}
\end{lemma}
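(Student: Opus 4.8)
The plan is to diagonalise $R$ pointwise, write the matrix square root $\sqrt R$ explicitly in terms of the coefficients of $R$, and thereby reduce the condition ``$\sqrt R\in\cM_{2\times2}(H^1(\R^3,\C))$'' to four scalar $H^1$-membership statements that can be checked by elementary pointwise inequalities together with the Sobolev embedding $H^1(\R^3)\hookrightarrow L^6(\R^3)$. Concretely, at a.e.\ point $R$ is a $2\times2$ Hermitian matrix, and $R\ge0$ is equivalent to $\rho^\spinup,\rho^\spindown\ge0$ and $\det R=\rho^\spinup\rho^\spindown-|\sigma|^2\ge0$; writing $\rho:=\rho^\spinup+\rho^\spindown$, $\delta:=\sqrt{\det R}$ and $w:=\sqrt{\rho+2\delta}$, the eigenvalues $\lambda_\pm\ge0$ of $R$ satisfy $\lambda_++\lambda_-=\rho$, $\lambda_+\lambda_-=\delta^2$, those of $\sqrt R$ are $\sqrt{\lambda_\pm}$ with sum $w$ and product $\delta$, and the Cayley--Hamilton identity for $\sqrt R$ gives $w\,\sqrt R=R+\delta\,\mathrm{Id}$, that is
\[
	\sqrt R=\frac1w\begin{pmatrix}\rho^\spinup+\delta & \sigma\\ \overline\sigma & \rho^\spindown+\delta\end{pmatrix}=:\begin{pmatrix}\phi_1^\spinup & \phi_2^\spinup\\ \phi_1^\spindown & \phi_2^\spindown\end{pmatrix}
\]
(set to $0$ where $w=0$, i.e.\ where $R=0$). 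Reading $R=\sqrt R\,(\sqrt R)^*$ and $\det\sqrt R=\delta$ off this formula gives the identities $\rho^{\spinup/\spindown}=|\phi_1^{\spinup/\spindown}|^2+|\phi_2^{\spinup/\spindown}|^2$, $\sigma=\phi_1^\spinup\overline{\phi_1^\spindown}+\phi_2^\spinup\overline{\phi_2^\spindown}$ and $\delta=\phi_1^\spinup\phi_2^\spindown-\phi_2^\spinup\phi_1^\spindown$ --- matching \eqref{eq:DM_Slater} with two orbitals --- hence $|\phi_j^{\spinup/\spindown}|^2\le\rho^{\spinup/\spindown}\le\rho$, together with the elementary bounds $0\le\delta\le\sqrt{\rho^\spinup\rho^\spindown}\le\rho/2$, $w^2\ge\rho$, $(\rho^{\spinup/\spindown}+\delta)/w^2\le1$ and $|\sigma|/w^2\le1/2$. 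Since each entry $\phi_j^{\spinup/\spindown}$ satisfies $|\phi_j^{\spinup/\spindown}|^2\le\rho\in L^1(\R^3)$ with $\int_{\R^3}\rho=N$, the four entries of $\sqrt R$ are automatically in $L^2(\R^3)$, so only the membership of their gradients in $L^2(\R^3)$ is at stake.

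For the implication $R\in\cC_N\Rightarrow\eqref{eq:conditions}$, I would assume $\phi_j^{\spinup/\spindown}\in H^1(\R^3)\hookrightarrow L^6(\R^3)$. Positivity and the mass constraint are immediate. From $\rho^\spinup=|\phi_1^\spinup|^2+|\phi_2^\spinup|^2$ and $|\phi_j^\spinup|\le\sqrt{\rho^\spinup}$ one gets $|\nabla\sqrt{\rho^\spinup}|\le|\nabla\phi_1^\spinup|+|\nabla\phi_2^\spinup|\in L^2(\R^3)$, so $\sqrt{\rho^{\spinup/\spindown}}\in H^1(\R^3)$. Since $\sigma$ and $\delta$ are sums of products of two functions of $H^1(\R^3)\subset L^2\cap L^6$, they lie in $L^{3/2}(\R^3)$ with gradients in $L^{3/2}(\R^3)$ (each a product of an $L^6$ factor by an $L^2$ gradient), whence $\sigma,\delta\in W^{1,3/2}(\R^3)$. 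Finally, differentiating the same bilinear expressions and using $|\phi_j^{\spinup/\spindown}|\le\sqrt\rho$ one dominates $|\nabla\sigma|/\sqrt\rho$ and $|\nabla\delta|/\sqrt\rho$ by $|\nabla\phi_1^\spinup|+|\nabla\phi_1^\spindown|+|\nabla\phi_2^\spinup|+|\nabla\phi_2^\spindown|\in L^2(\R^3)$, i.e.\ $|\nabla\sigma|^2\rho^{-1},|\nabla\delta|^2\rho^{-1}\in L^1(\R^3)$; this is exactly \eqref{eq:conditions}.

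For the converse \eqref{eq:conditions}$\Rightarrow R\in\cC_N$, the conditions already give $R=R^*$, $R\ge0$, $R\in\cM_{2\times2}(L^1(\R^3,\C))$ (using $|\sigma|\le\rho/2\in L^1(\R^3)$) and $\int_{\R^3}\tr_{\C^2}[R]=N$, so by the first paragraph it remains to show $\nabla\phi_j^{\spinup/\spindown}\in L^2(\R^3)$ for $\sqrt R$ built from the given data. First, $\sqrt\rho\in H^1(\R^3)$ because $|\nabla\sqrt\rho|\le|\nabla\sqrt{\rho^\spinup}|+|\nabla\sqrt{\rho^\spindown}|\in L^2(\R^3)$, hence $|\nabla\rho|^2\rho^{-1}\in L^1(\R^3)$. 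Then for $\phi_1^\spinup=(\rho^\spinup+\delta)/w$ with $\nabla w=(\nabla\rho+2\nabla\delta)/(2w)$,
\[
	\nabla\phi_1^\spinup=\frac{\nabla\rho^\spinup+\nabla\delta}{w}-\frac{(\rho^\spinup+\delta)(\nabla\rho+2\nabla\delta)}{2w^3},
\]
where $|\nabla\rho^\spinup|/w\le|\nabla\rho^\spinup|/\sqrt{\rho^\spinup}=2|\nabla\sqrt{\rho^\spinup}|\in L^2$ (using $\nabla\rho^\spinup=0$ a.e.\ on $\{\rho^\spinup=0\}$), $|\nabla\delta|/w\le|\nabla\delta|/\sqrt\rho\in L^2$, and, using $(\rho^\spinup+\delta)/w^2\le1$ and $1/w\le1/\sqrt\rho$, the last term is bounded by $\tfrac12(|\nabla\rho|+2|\nabla\delta|)/\sqrt\rho\in L^2$. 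The same estimates (with $|\sigma|/w^2\le1/2$ and the hypothesis $|\nabla\sigma|^2\rho^{-1}\in L^1(\R^3)$) handle $\phi_1^\spindown=\overline\sigma/w$, $\phi_2^\spinup=\sigma/w$ and $\phi_2^\spindown=(\rho^\spindown+\delta)/w$, so $\sqrt R\in\cM_{2\times2}(H^1(\R^3,\C))$ and $R\in\cC_N$.

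The main obstacle is not conceptual but a matter of justifying the calculus: $w$ may vanish (on $\{\rho=\delta=0\}$), so the product/quotient rules used above are not directly licit for Sobolev functions. I would deal with this by replacing $w$ with $w_\varepsilon:=\sqrt{\rho+2\delta+\varepsilon}$, which satisfies $w_\varepsilon\ge\max(\sqrt\varepsilon,\sqrt\rho)$ and all the pointwise inequalities of the first paragraph a fortiori, carrying out the (now legitimate) estimates of the third paragraph uniformly in $\varepsilon>0$ to bound the regularised entries in $H^1(\R^3)$, and then passing to the weak $H^1$ limit as $\varepsilon\to0$, which coincides with the pointwise a.e.\ limit $\phi_j^{\spinup/\spindown}$. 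A symmetric (and standard) care is needed in the second paragraph to apply the chain rule to $\sqrt{\rho^{\spinup/\spindown}}$ on the set where it vanishes. Everything else reduces to Hölder's inequality and the three-dimensional Sobolev embedding.
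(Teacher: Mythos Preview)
The paper does not actually prove this lemma: it is stated with the remark ``The following characterization of $\cC_N$ was proved in~\cite{Gontier2013}'' and no argument is given here. So there is no in-paper proof to compare against.

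That said, your proposal is a correct and self-contained proof. The key device --- the Cayley--Hamilton identity $w\sqrt R=R+\delta\,\mathrm{Id}$ with $w=\sqrt{\rho+2\delta}$, $\delta=\sqrt{\det R}$ --- gives an explicit formula for the entries of $\sqrt R$ in terms of the data $(\rho^\spinup,\rho^\spindown,\sigma,\delta)$, and all the pointwise bounds you record ($\delta\le\rho/2$, $w^2\ge\rho$, $(\rho^{\spinup/\spindown}+\delta)/w^2\le1$, $|\sigma|/w^2\le1/2$) are correct. The forward direction is essentially the standard ``diamagnetic'' estimate $|\nabla\sqrt{|\phi_1|^2+|\phi_2|^2}|\le|\nabla\phi_1|+|\nabla\phi_2|$ together with H\"older and the Sobolev embedding $H^1(\R^3)\hookrightarrow L^6(\R^3)$; the converse is a direct quotient-rule computation controlled by the weighted $L^1$ hypotheses. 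Your treatment of the vanishing set of $w$ by the $w_\varepsilon=\sqrt{\rho+2\delta+\varepsilon}$ regularisation and weak $H^1$-compactness is the right way to make the chain/quotient rules rigorous; the uniform bounds indeed survive because every occurrence of $1/w$ in your estimates is dominated by $1/\sqrt\rho$, which is $\varepsilon$-independent. One small point worth making explicit: in the forward direction you also need $\sqrt{\rho^{\spinup/\spindown}}\in L^2$, which follows from $\rho^{\spinup/\spindown}\le\rho\in L^1$; and the chain rule for $\sqrt{\rho^{\spinup/\spindown}}$ on its zero set is justified by the same $\varepsilon$-regularisation (or by the fact that $\nabla\rho^{\spinup/\spindown}=0$ a.e.\ on $\{\rho^{\spinup/\spindown}=0\}$, which you invoke in the converse). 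With these routine caveats, the argument is complete.
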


The complete answer for $N$-representability in SDFT is given by the following theorem, whose proof is given in Section~\ref{sec:proof_SDFT}.

\begin{theorem} \label{th:SDFT} $\,$ \\
\noindent \textbf{Case $N=1$:} It holds that
\[
	\cJ_1^\slater = \cJ_1^\pure = \cC_1^0 \quad \text{and}  \quad \cJ_1^\mixed = \cC_1.
\]
\noindent \textbf{Case $N \ge 2$:} For all $N \ge 2$, it holds that
\[
	\cJ_N^\slater = \cJ_N^\pure = \cJ_N^\mixed = \cC_N.
\]

\end{theorem}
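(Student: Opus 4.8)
The plan is to reduce the theorem to the two constructive inclusions $\cC_1^0 \subseteq \cJ_1^\slater$ and $\cC_N \subseteq \cJ_N^\slater$ for $N \ge 2$, everything else being trivial or already known. Indeed $\cJ_N^\slater \subseteq \cJ_N^\pure \subseteq \cJ_N^\mixed$ always holds, and the identity $\cJ_N^\mixed = \cC_N$ for every $N \ge 1$ is the mixed-state result of~\cite{Gontier2013}. When $N = 1$ a pure state is a single normalized $\Phi \in H^1(\R^3,\C^2)$, so $R_\Phi = \Phi\Phi^*$ has pointwise rank at most one and $\det R_\Phi \equiv 0$; together with $\cW_1^\slater = \cW_1^\pure$ this gives $\cJ_1^\slater = \cJ_1^\pure \subseteq \cC_1^0 \subseteq \cC_1 = \cJ_1^\mixed$, so only $\cC_1^0 \subseteq \cJ_1^\slater$ is missing. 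For this, start from $R \in \cC_1^0$ with diagonal entries $\rho^\spinup, \rho^\spindown$ and off-diagonal entry $\sigma$; since $\det R = \rho^\spinup\rho^\spindown - |\sigma|^2 \equiv 0$ one may set $\phi^\spindown := \sqrt{\rho^\spindown}$ and $\phi^\spinup := \sigma/\sqrt{\rho^\spindown}$ on $\{\rho^\spindown > 0\}$, $\phi^\spinup := \sqrt{\rho^\spinup}$ elsewhere (consistently, because $\det R \equiv 0$ forces $\sigma = 0$ on $\{\rho^\spindown = 0\}$). Then $\Phi := (\phi^\spinup,\phi^\spindown)^T$ satisfies $\Phi\Phi^* = R$, $\|\Phi\|_{L^2}^2 = \int\tr_{\C^2}R = 1$, and the conditions $\sqrt{\rho^{\spinup/\spindown}}\in H^1$ and $|\nabla\sigma|^2\rho^{-1}\in L^1$ of Lemma~\ref{lem:CN} guarantee $\Phi \in H^1(\R^3,\C^2)$, whence $R = R_{\sS[\Phi]} \in \cJ_1^\slater$.

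Now fix $N \ge 2$ and $R \in \cC_N$. The goal is to produce $g_1,\dots,g_N \in H^1(\R^3,\C^2)$, each of unit $L^2$-norm, with $\sum_{k=1}^N g_k g_k^* = R$ pointwise; one then orthonormalizes them. Since $\sqrt R \in \cM_{2\times 2}(H^1(\R^3,\C))$ and $R = (\sqrt R)(\sqrt R)^*$, the two columns of $\sqrt R$ already give $R = c_1 c_1^* + c_2 c_2^*$ with $c_1, c_2 \in H^1(\R^3,\C^2)$. The key step is a peeling lemma: if $S = B B^*$ with $B \in \cM_{2\times 2}(H^1)$ and $\int\tr_{\C^2}S = n \ge 3$, let $\mu$ be the largest eigenvalue of the constant matrix $\int_{\R^3}B^*B$ (its trace equals $\int\tr_{\C^2}S = n$, so $\mu \ge n/2 \ge 3/2$) and $v \in \C^2$ a unit eigenvector; put $g := \mu^{-1/2}Bv \in H^1(\R^3,\C^2)$. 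Then $\|g\|_{L^2}^2 = \mu^{-1}v^*\big(\int B^*B\big)v = 1$, and $S - g g^* = B\big(I - \mu^{-1}vv^*\big)B^*$ is pointwise positive semidefinite because $\mu \ge 1$; writing $I - \mu^{-1}vv^* = T^2$ with $T$ a constant positive semidefinite matrix gives $S - g g^* = (BT)(BT)^*$ with $BT \in \cM_{2\times 2}(H^1)$ and $\int\tr_{\C^2}(S - g g^*) = n - 1$. Iterating this from $S = R$, $B = \sqrt R$ peels off successively $g_N, g_{N-1},\dots,g_3$ (every intermediate trace is at least $3$, so the hypothesis $\mu \ge 1$ always holds) and leaves $R' = B'(B')^*$ with $B' \in \cM_{2\times 2}(H^1)$ and $\int\tr_{\C^2}R' = 2$. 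Its columns give $R' = c_1'(c_1')^* + c_2'(c_2')^*$ with $\|c_1'\|_{L^2}^2 + \|c_2'\|_{L^2}^2 = 2$, and the constant rotation $(c_1',c_2') \mapsto (\cos\alpha\, c_1' - \sin\alpha\, c_2',\ \sin\alpha\, c_1' + \cos\alpha\, c_2')$ leaves $R'$ unchanged and, by the intermediate value theorem in $\alpha \in [0,\pi/2]$, can be arranged so that the first function has unit $L^2$-norm; then so has the second. Calling these $g_1, g_2$, we have obtained $g_1,\dots,g_N \in H^1$, all of unit norm, with $\sum_k g_k g_k^* = R$.

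It remains to orthonormalize. Applying the Lazarev-Lieb orthogonalization process (Lemma~\ref{lem:HR}) to $(g_k)_{k=1}^N$ produces real functions $f_1,\dots,f_N$ such that the functions $\Phi_k := \re^{\ri f_k}g_k$ are mutually orthogonal --- hence orthonormal, since multiplication by $\re^{\ri f_k}$ preserves the $L^2$-norm --- and still belong to $H^1(\R^3,\C^2)$. The unimodular factors do not change $g_k g_k^*$, so by~\eqref{eq:DM_Slater} the spin-density matrix of $\sS[\Phi_1,\dots,\Phi_N]$ equals $\sum_k \Phi_k\Phi_k^* = \sum_k g_k g_k^* = R$, and $\Phi_k \in H^1$ ensures $\sS[\Phi_1,\dots,\Phi_N] \in \cW_N^\slater$; hence $R \in \cJ_N^\slater$. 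This proves $\cC_N \subseteq \cJ_N^\slater$ for $N \ge 2$ and, with the reductions above, the whole theorem. The main obstacle lies in this final step: it is precisely the Lazarev-Lieb machinery (Lemma~\ref{lem:HR}) that allows a normalized family to be orthogonalized by phase multiplications without leaving $H^1$, and this is where all the analytic difficulty sits. Two lesser points also need attention: the $H^1$-regularity of the rank-one factorization in the case $N = 1$, and --- which is why the peeling above is phrased through the factorizations $S = B B^*$ rather than through positive square roots $\sqrt S$ --- the fact that no matrix square root other than the given $\sqrt R$ ever needs to be controlled in $H^1$.
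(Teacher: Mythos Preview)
Your argument has one genuine gap, in the $N=1$ case. You set $\phi^\spinup = \sigma/\sqrt{\rho^\spindown}$ and assert that the conditions of Lemma~\ref{lem:CN} ``guarantee $\Phi \in H^1$''; they do not. What Lemma~\ref{lem:CN} supplies is $|\nabla\sigma|^2/\rho \in L^1$ with $\rho = \rho^\spinup + \rho^\spindown$, whereas controlling $\nabla(\sigma/\sqrt{\rho^\spindown})$ in $L^2$ requires essentially $|\nabla\sigma|^2/\rho^\spindown \in L^1$ and $(\rho^\spinup/\rho^\spindown)\,|\nabla\sqrt{\rho^\spindown}|^2 \in L^1$, neither of which follows when $\rho^\spindown \ll \rho^\spinup$. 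A local model: $\rho^\spinup \equiv 1$, $\rho^\spindown = x_1^4$, $\sigma = x_1^2\,\re^{\ri\log x_1}$ for small $x_1>0$; then $|\nabla\sigma|^2/\rho \sim x_1^2$ is integrable but $\sigma/\sqrt{\rho^\spindown} = \re^{\ri\log x_1}$ has $|\nabla(\cdot)|^2 = x_1^{-2} \notin L^1_\loc$. The paper meets precisely this obstruction: it observes that each of the two natural candidates $(\sqrt{\rho^\spinup},\,\overline\sigma/\sqrt{\rho^\spinup})^T$ and $(\sigma/\sqrt{\rho^\spindown},\,\sqrt{\rho^\spindown})^T$ can fail to lie in $H^1$, and builds an explicit interpolation via a cutoff $\chi(\rho^\spinup/\rho^\spindown)$, so that in the region $\rho^\spinup \gtrsim \rho^\spindown$ one effectively uses the first orbital and in the region $\rho^\spindown \gtrsim \rho^\spinup$ the second; in each regime the dangerous denominator is comparable to $\rho$ and the estimates close. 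Your labeling this a ``lesser point'' underestimates it: it is the whole analytic content of the $N=1$ case.

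For $N \ge 2$, by contrast, your route is correct and genuinely different from the paper's. The paper proves $N=2$ by writing $R = R^\spinup + R^\spindown$ (columns of $\sqrt R$), then \emph{invoking the hard $N=1$ case} to represent each normalized $R^\alpha/m^\alpha$ by a single orbital, before recombining with Lazarev--Lieb phases; $N>2$ is handled by induction through the lemma $\cJ_{N+M}^\slater = \cJ_N^\slater + \cJ_M^\slater$. Your peeling instead produces every $g_k$ as a \emph{constant} linear combination of the columns of $\sqrt R$ (each $g$ is of the form $\mu^{-1/2}\sqrt R\,T_1\cdots T_j v$ with constant $T_i$ and $v$), which is manifestly in $H^1$ since $\sqrt R \in \cM_{2\times 2}(H^1)$; no square root of any intermediate matrix function is ever taken. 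The final rotation at $n=2$ and the Lazarev--Lieb orthogonalization are routine. This is a real simplification of the paper's $N\ge 2$ argument, and it decouples it from the $N=1$ analysis --- but it does not replace that analysis, which is an independent clause of the theorem and still needs the interpolation above.
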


Note that the equality $\cJ_N^\mixed = \cC_N^\mixed$ for all $N \in \N^*$ was already proven in~\cite{Gontier2013}.

\begin{remark} \label{rem:rep_rho}
	Gilbert~\cite{Gilbert1975}, Harriman~\cite{Harriman1981} and Lieb~\cite{Lieb1983} proved that the $N$-representability set for the total electronic density $\rho$ is the same for Slater-states, pure-states and mixed-states, and is characterized by
	\begin{equation} \label{eq:Lieb1983}
		\cI_N := \left\{ \rho \in L^1(\R^3), \ \rho \ge 0, \ \int_{\R^3} \rho = N, \ \sqrt{\rho} \in H^1(\R^3) \right\}.
	\end{equation}
Comparing~\eqref{eq:Lieb1983} and~\eqref{eq:CN}, we see that our theorem is a natural extension of the previous result.
\end{remark}

\subsection{Representability in CSDFT}

We first recall some classical necessary conditions for a pair $(R, \bj)$ to be $N$-representable (we refer to~\cite{Tellgren2014, Lieb2013} for the proof). In the sequel, we will denote by $\rho^\spinup := \rho^{\spinup \spinup}$, $\rho^\spindown := \rho^{\spindown \spindown}$ and $\sigma := \rho^{\spinup \spindown}$ the elements of a matrix $R$, so that $R = \begin{pmatrix} \rho^\spinup & \sigma \\ \overline{\sigma} & \rho^\spindown \end{pmatrix}$, and by $\rho = \rho^\spinup + \rho^\spindown$ the associated total electronic density.
\begin{lemma}
	If a pair $(R, \bj)$ is representable by a mixed-state $N$-body density matrix, then
	\begin{equation} \label{eq:necessary_j}
		\left\{ \begin{array}{l}
			R \in \cC_N \\
			| \bj |^2 / \rho \in L^1(\R^3). \\
		\end{array} \right.
	\end{equation}
\end{lemma}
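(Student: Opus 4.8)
The plan is to verify each of the two conditions in \eqref{eq:necessary_j} separately, working first with a single Slater determinant and then passing to mixed states by convexity. Suppose first that $\Gamma = \Gamma_\Psi$ with $\Psi = \sS[\Phi_1, \ldots, \Phi_N]$ and $\Phi_k = (\phi_k^\spinup, \phi_k^\spindown)^T$ orthonormal with each $\phi_k^{\spinup/\spindown} \in H^1(\R^3, \C)$ (finite kinetic energy). The condition $R_\Gamma \in \cC_N$ is already part of the representability statement for $R$ alone, so strictly speaking it can be invoked from Theorem~\ref{th:SDFT}; but it is cleaner to note that from \eqref{eq:DM_Slater} one has $R_\Gamma \ge 0$ pointwise (each summand is a rank-one positive matrix $\Phi_k(\br) \Phi_k(\br)^*$ up to complex conjugation conventions), $\int \tr_{\C^2} R_\Gamma = \sum_k \|\Phi_k\|_{L^2}^2 = N$, and the Sobolev regularity of $\sqrt{R_\Gamma}$ follows from the explicit characterization in Lemma~\ref{lem:CN} together with standard manipulations (the diagonal entries dominate: $\sqrt{\rho^\spinup} \in H^1$ because $\rho^\spinup = \sum_k |\phi_k^\spinup|^2$ is a finite sum of densities of $H^1$ functions, and similarly for the off-diagonal estimates). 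The genuinely new point is the current condition.

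For the current, start from the Slater formula \eqref{eq:j_Slater}, $\bj_\Gamma = \sum_{k=1}^N \Im\left(\overline{\phi_k^\spinup}\nabla\phi_k^\spinup + \overline{\phi_k^\spindown}\nabla\phi_k^\spindown\right)$. The key pointwise inequality is a Cauchy--Schwarz-type bound: for each $k$ and each spin $\alpha$, $\left|\Im(\overline{\phi_k^\alpha}\nabla\phi_k^\alpha)\right| \le |\phi_k^\alpha|\,|\nabla\phi_k^\alpha|$, so that
\[
	|\bj_\Gamma| \le \sum_{k=1}^N \sum_{\alpha} |\phi_k^\alpha|\,|\nabla\phi_k^\alpha|.
\]
Now apply the discrete Cauchy--Schwarz inequality in the index pair $(k,\alpha)$:
\[
	|\bj_\Gamma|^2 \le \left(\sum_{k,\alpha} |\phi_k^\alpha|^2\right)\left(\sum_{k,\alpha} |\nabla\phi_k^\alpha|^2\right) = \rho_\Gamma \cdot \tau_\Gamma,
\]
where $\rho_\Gamma = \sum_{k,\alpha}|\phi_k^\alpha|^2$ is the total density and $\tau_\Gamma := \sum_{k,\alpha}|\nabla\phi_k^\alpha|^2$ is (twice) the kinetic energy density, which is integrable with $\int_{\R^3}\tau_\Gamma = \|\nabla\Psi\|_{L^2}^2 < \infty$ by the finite-kinetic-energy assumption. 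Dividing by $\rho_\Gamma$ (on the set where it is positive; elsewhere $\bj_\Gamma = 0$ a.e. as well) gives $|\bj_\Gamma|^2/\rho_\Gamma \le \tau_\Gamma \in L^1(\R^3)$, which is the desired bound for a single Slater determinant. The same argument works verbatim for a general pure state $\Psi$ using the one-body reduced density matrix and its eigenexpansion, or one simply invokes that $G_N^\pure$ states have the analogous kinetic energy density control.

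Finally, to handle an arbitrary mixed state $\Gamma = \sum_m n_m \Gamma_{\Psi_m}$, use that the maps $\Gamma \mapsto R_\Gamma$ and $\Gamma \mapsto \bj_\Gamma$ are both affine (indeed linear), so $R_\Gamma = \sum_m n_m R_{\Psi_m}$, $\rho_\Gamma = \sum_m n_m \rho_{\Psi_m}$, $\bj_\Gamma = \sum_m n_m \bj_{\Psi_m}$, and $\tau_\Gamma = \sum_m n_m \tau_{\Psi_m}$ with $\int\tau_\Gamma = \sum_m n_m\|\nabla\Psi_m\|^2_{L^2}$, which is finite since $\Gamma \in G_N^\mixed$ requires finite average kinetic energy. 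The map $(\rho,\bj) \mapsto |\bj|^2/\rho$ is jointly convex on $\{\rho > 0\}$ (it is a perspective-type function; equivalently, $(\bj,\rho) \mapsto |\bj|^2/\rho$ is the supremum of the affine functions $\bj \mapsto 2\bm\cdot\bj - |\bm|^2\rho$ over constant vectors $\bm$), hence Jensen's inequality gives $|\bj_\Gamma|^2/\rho_\Gamma \le \sum_m n_m |\bj_{\Psi_m}|^2/\rho_{\Psi_m} \le \sum_m n_m \tau_{\Psi_m} = \tau_\Gamma \in L^1(\R^3)$, and $R_\Gamma \in \cC_N$ by convexity of $\cC_N$. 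I expect the only mildly delicate point to be the careful treatment of the set $\{\rho_\Gamma = 0\}$ — one must argue that $\bj_\Gamma$ vanishes a.e. there (which follows from the pointwise bound $|\bj_\Gamma| \le \sqrt{\rho_\Gamma\,\tau_\Gamma}$) so that $|\bj_\Gamma|^2/\rho_\Gamma$ is unambiguously defined and the $L^1$ estimate is genuinely global; the convexity bookkeeping in the mixed-state step is routine once this is settled.
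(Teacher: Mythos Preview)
The paper does not actually prove this lemma; it states the result as a ``classical necessary condition'' and refers to \cite{Tellgren2014, Lieb2013} for the proof. Your argument is essentially the standard one found in those references and is correct: the pointwise Cauchy--Schwarz bound $|\bj|^2 \le \rho\,\tau$ (with $\tau$ the kinetic-energy density) handles pure states, and joint convexity of the perspective function $(\rho,\bj)\mapsto |\bj|^2/\rho$ passes the estimate to mixed states.

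Two minor remarks. First, for a general pure state you do not need the natural-orbital expansion: applying Cauchy--Schwarz directly in the variables $(\alpha,\vec s,\vec z)$ to $\bj(\br)=N\,\Im\sum_{\alpha,\vec s}\int \overline{\Psi}\,\nabla_{\br}\Psi\,\rd\vec z$ gives $|\bj|^2\le\rho\,\tau$ in one line, which is slightly cleaner than the route you sketch. Second, the paper's definition of $G_N^\mixed$ does not explicitly impose $\sum_m n_m\|\nabla\Psi_m\|_{L^2}^2<\infty$, so your appeal to ``finite average kinetic energy'' rests on the standard convention (implicit in the cited references) rather than on the literal text here; this is harmless in context. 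Your invocation of Theorem~\ref{th:SDFT} for $R\in\cC_N$ is not circular, since the inclusion $\cJ_N^\mixed\subset\cC_N$ is attributed to the earlier work~\cite{Gontier2013}.
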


From the second condition of~\eqref{eq:necessary_j}, it must hold that the support of $\bj$ is contained in the support of $\rho$. The vector $\bv := \rho^{-1} \bj$ is called the velocity field, and $\bw := \curl(\bv)$ is called the vorticity.\\

Let us first consider the pure-state setting. Recall that in the spin-less setting, in the case $N=1$, a pair $(\rho, \bj)$ representable by a single orbital generally satisfies (provided that the phases of the orbital are globally well-defined) the curl-free condition $\curl (\rho^{-1} \bj ) = \bnull$ (see~\cite{Lieb2013, Tellgren2014}). This is no longer the case when spin is considered, as is shown is the following Lemma, whose proof is postponed until Section~\ref{sec:proof_CSDFT_N=1}.

\begin{lemma}[CSDFT, case $N=1$] \label{lem:CSDFT_N=1}
Let $\Phi = (\phi^\spinup, \phi^\spindown)^T \in \cW_1^\slater$ be such that both $\phi^\spinup$ and $\phi^\spindown$ have well-defined global phases in $C^1(\R)$. Then, the associated pair $(R, \bj)$ satisfies $R \in \cC_1^0$, ${| \bj |^2}/{\rho} \in L^1(\R^3)$, and the two curl-free conditions
\begin{equation}  \label{eq:cond_curlfree}
	\curl \left( \dfrac{\bj}{\rho} - \dfrac{\Im( \overline{\sigma} {\nabla \sigma})}{\rho \rho^\spindown} \right) = \bnull, \ 
	\curl \left( \dfrac{\bj}{\rho} + \dfrac{\Im( \overline{\sigma} {\nabla \sigma})}{\rho \rho^\spinup} \right)  = \bnull.
\end{equation}
\end{lemma}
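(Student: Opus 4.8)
The plan is to compute explicitly the quantities appearing in \eqref{eq:cond_curlfree} in terms of the moduli and phases of $\phi^\spinup$ and $\phi^\spindown$, and then recognize that the combinations under the curl become gradients of scalar functions, whose curl vanishes. Write $\phi^\spinup = \sqrt{\rho^\spinup}\, \re^{\ri \theta^\spinup}$ and $\phi^\spindown = \sqrt{\rho^\spindown}\, \re^{\ri \theta^\spindown}$, where by hypothesis $\theta^\spinup, \theta^\spindown \in C^1(\R^3)$ are globally well-defined (at least on the support of the respective orbitals). The facts $R \in \cC_1^0$ and $|\bj|^2/\rho \in L^1$ follow directly from Lemma~\ref{lem:CN} and \eqref{eq:necessary_j}, or can be checked by hand from \eqref{eq:DM_Slater}-\eqref{eq:j_Slater}: indeed $\det R = \rho^\spinup \rho^\spindown - |\sigma|^2 = \rho^\spinup \rho^\spindown - \rho^\spinup \rho^\spindown = 0$ since $|\sigma|^2 = |\phi^\spinup|^2 |\phi^\spindown|^2$.

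Next I would carry out the main calculation. From \eqref{eq:j_Slater} with $N=1$,
\[
	\bj = \rho^\spinup \nabla \theta^\spinup + \rho^\spindown \nabla \theta^\spindown,
\]
and $\sigma = \phi^\spinup \overline{\phi^\spindown} = \sqrt{\rho^\spinup \rho^\spindown}\, \re^{\ri(\theta^\spinup - \theta^\spindown)}$, so that
\[
	\Im(\overline{\sigma}\, \nabla \sigma) = |\sigma|^2 \nabla(\theta^\spinup - \theta^\spindown) = \rho^\spinup \rho^\spindown \,\nabla(\theta^\spinup - \theta^\spindown).
\]
Therefore
\[
	\dfrac{\bj}{\rho} - \dfrac{\Im(\overline{\sigma}\,\nabla\sigma)}{\rho \rho^\spindown} = \dfrac{\rho^\spinup \nabla \theta^\spinup + \rho^\spindown \nabla \theta^\spindown - \rho^\spinup \nabla(\theta^\spinup - \theta^\spindown)}{\rho} = \dfrac{(\rho^\spinup + \rho^\spindown)\nabla\theta^\spindown}{\rho} = \nabla \theta^\spindown,
\]
and symmetrically
\[
	\dfrac{\bj}{\rho} + \dfrac{\Im(\overline{\sigma}\,\nabla\sigma)}{\rho \rho^\spinup} = \dfrac{\rho^\spinup \nabla\theta^\spinup + \rho^\spindown\nabla\theta^\spindown + \rho^\spindown\nabla(\theta^\spinup - \theta^\spindown)}{\rho} = \nabla\theta^\spinup.
\]
Since $\theta^\spinup, \theta^\spindown \in C^1(\R^3)$, the right-hand sides are gradients of $C^1$ scalar fields, hence their curl vanishes in the distributional sense, which proves \eqref{eq:cond_curlfree}.

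The only genuinely delicate point is the treatment of the zero set of $\rho^\spinup$, $\rho^\spindown$ and $\rho$. The denominators $\rho$, $\rho\rho^\spindown$, $\rho\rho^\spinup$ may vanish, and the phases $\theta^\spinup, \theta^\spindown$ are only meaningful where the corresponding orbital is nonzero; this is exactly why the statement assumes globally well-defined $C^1$ phases (so no vortices/singularities arise). I would argue that on the open set $\{\rho > 0\}$ the identity above holds pointwise, and that the integrability conditions from \eqref{eq:necessary_j} and Lemma~\ref{lem:CN} (in particular $|\nabla\sigma|^2/\rho \in L^1$ and $|\bj|^2/\rho \in L^1$) guarantee that the vector fields inside the curl are in $L^1_{\loc}$, so that the curl-free identity extends across the (measure-zero) boundary of the support as an identity of distributions; one should note that the combinations are bounded precisely because the potentially singular factors $\rho^\spinup/\rho \le 1$ and $\rho^\spindown/\rho\le 1$ are harmless. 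I expect this measure-theoretic bookkeeping near the nodal set to be the main obstacle, everything else being a direct substitution.
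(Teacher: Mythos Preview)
Your proof is correct and follows essentially the same approach as the paper: write $\phi^\alpha=\sqrt{\rho^\alpha}\,\re^{\ri\theta^\alpha}$, compute $\bj=\rho^\spinup\nabla\theta^\spinup+\rho^\spindown\nabla\theta^\spindown$ and $\Im(\overline{\sigma}\nabla\sigma)=\rho^\spinup\rho^\spindown\nabla(\theta^\spinup-\theta^\spindown)$, and observe that the two combinations in~\eqref{eq:cond_curlfree} reduce to $\nabla\theta^\spindown$ and $\nabla\theta^\spinup$, which are curl-free. Your additional discussion of the nodal set and the distributional interpretation of the curl is a point the paper does not address explicitly.
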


\begin{remark}
	If we write $\sigma = | \sigma | \re^{\ri \tau}$, then, $| \sigma |^2 = \rho^{\spinup} \rho^{\spindown}$, and
	\begin{equation} \label{eq:nabla_tau}
		\Im \left( \overline{\sigma} { \nabla \sigma} \right) = | \sigma |^2 \nabla \tau = \rho^\spinup \rho^\spindown \nabla \tau.
	\end{equation}
	In particular, it holds that
	\[
		\curl \left( \dfrac{\Im( \overline{\sigma} {\nabla \sigma})}{\rho \rho^\spindown} + \dfrac{\Im( \overline{\sigma} {\nabla \sigma})}{\rho \rho^\spinup} \right) = \curl \ (\nabla \tau) = \bnull,
	\]
	so that one of the equalities in~\eqref{eq:cond_curlfree} implies the other one.
\end{remark}

\begin{remark}
	We recover the traditional result in the spin-less case, where $\sigma \equiv 0$.
\end{remark}


In the case $N > 1$, things are very different. In~\cite{Lieb2013}, the authors gave a rigorous proof for the representability of the pair $(\rho, \bj)$ by a Slater determinant (of orbitals having well-defined global phases) whenever $N \ge 4$ under a mild condition (see equation~\eqref{eq:Liebcond} below). By adapting their proof to our case, we are able to ensure representability of a pair $(R, \bj)$ by a Slater determinant for $N \ge 12$ under the same mild condition (see Section~\ref{sec:proof_CSDFT} for the proof).

\begin{theorem}[CSDFT, case $N \ge 12$] \label{th:CSDFT} $\,$ \\
A sufficient set of conditions for a pair $(R, \bj)$ to be representable by a Slater determinant is
\begin{itemize}
	\item $R \in \cC_N$ with $N \ge 12$ and $\bj$ satisfies $ | \bj |^2 / \rho \in L^1(\R^3)$
	\item there exists $\delta > 0$ such that,
\begin{equation} \label{eq:Liebcond}
		\sup_{\br \in \R^3} \ f(\br)^{(1+\delta)/2} | \bw (\br) | < \infty, \
		\sup_{\br \in \R^3} \ f(\br)^{(1+\delta)/2} | \nabla \bw (\br) | < \infty,
\end{equation}
where $\bw := \curl \ (\rho^{-2} \bj)$ is the vorticity, and
\[
	f(\br) := ( 1 + (r_1)^2)  ( 1 + (r_2)^2)  ( 1 + (r_3)^2).
\]
\end{itemize}

\end{theorem}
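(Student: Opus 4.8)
The plan is to combine the pure-state SDFT construction behind Theorem~\ref{th:SDFT} with the orbital-phase technique of Lieb and Schrader~\cite{Lieb2013}, glued together by the Lazarev--Lieb orthogonalization~\cite{Lazarev2014} of Lemma~\ref{lem:HR}. First I would fix a Harriman-type ansatz for $R$. Since $R\in\cC_N$ one has $\sqrt\rho\in H^1(\R^3)$, and pointwise on $\{\rho>0\}$ the Hermitian matrix $M:=(N/\rho)R$ is non-negative, has trace $N$ and largest eigenvalue at most $N$; hence $M$ can be written as a sum of $N$ rank-one orthogonal projections, $M(\br)=\sum_{k=1}^N\zeta_k(\br)\zeta_k(\br)^*$ with $\zeta_k(\br)\in\C^2$ of unit length, and there is a great deal of freedom in this choice. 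Looking for the Slater determinant generated by $\Phi_k:=\sqrt{\rho/N}\,\zeta_k\,\re^{\ri\phi_k}$, formula~\eqref{eq:DM_Slater} gives $\sum_k\Phi_k\Phi_k^*=(\rho/N)M=R$ for \emph{any} choice of the real phases $\phi_k$, while~\eqref{eq:j_Slater} gives
\[
	\bj_\Gamma \;=\; \frac{\rho}{N}\sum_{k=1}^N\nabla\phi_k \;+\; \bj^{\mathrm{spin}},
	\qquad \bj^{\mathrm{spin}} \;:=\; \frac{\rho}{N}\sum_{k=1}^N\Im\bigl(\zeta_k^*\,\nabla\zeta_k\bigr).
\]
The entire subtlety of this step is to choose the frame $\{\zeta_k\}$ so that it stays regular where two eigenvalues of $R$ coincide and where $\rho$ vanishes --- so that each $\sqrt{\rho/N}\,\zeta_k$ lies in $H^1(\R^3)$ and $\bj^{\mathrm{spin}}$ together with its vorticity is under control --- which is exactly what is arranged in the proof of Theorem~\ref{th:SDFT}, following~\cite{Gontier2013}; the $W^{1,3/2}$ and weighted-$L^1$ conditions on the coefficients of $R$ in Lemma~\ref{lem:CN} are tailored to make this possible.

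It then remains to choose the phases $\phi_k$ so that $\sum_k\nabla\phi_k=(N/\rho)(\bj-\bj^{\mathrm{spin}})$ and the family $(\Phi_k)$ is orthonormal. This is a scalar phase problem of the type solved in~\cite{Lieb2013}, and I would treat it in the same way: devote a bounded number of orbitals, arranged as there, to generating the non-curl-free part of the target velocity field, and use the remaining orbitals to absorb the phase corrections that restore $\langle\Phi_j,\Phi_k\rangle=\delta_{jk}$, by means of the Lazarev--Lieb/Hobby--Rice orthogonalization of Lemma~\ref{lem:HR}, which twists the phases with controlled gradients and hence alters neither $R$ nor $\bj_\Gamma$. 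The hypothesis $|\bj|^2/\rho\in L^1(\R^3)$ then yields, together with the $H^1$ control from the first step, finite kinetic energy for the orbitals, and the weighted bounds~\eqref{eq:Liebcond} on $\bw$ and $\nabla\bw$ --- inherited, thanks to the regularity built into the first step, by the effective vorticity $\curl(\rho^{-2}(\bj-\bj^{\mathrm{spin}}))$ --- are precisely what make the Lazarev--Lieb iteration converge with $C^1$ phases.

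The hard part will be the bookkeeping that brings the threshold down to $N\ge 12$. One has to retain enough spinor directions $\zeta_k$ for the frame decomposition of $M$ to be smooth across the eigenvalue crossings of $R$ (rather than relying on its two singular eigenvectors), and simultaneously enough free orbitals to run both the Lieb--Schrader vorticity construction and the Lazarev--Lieb orthogonalization without either spoiling the other --- and all of these competing demands must still be met in the degenerate regimes where $\rho$, or one eigenvalue of $R$, vanishes on a set of positive measure. Tracking the worst of these cases is what forces $N\ge 12$, as against $N\ge 4$ in the spinless theory of~\cite{Lieb2013}, and checking that the $W^{1,3/2}$, weighted-$L^1$ and $H^1$ conditions of Lemma~\ref{lem:CN} all survive the phase multiplications and the orthogonalization is the technical heart of the argument. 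The range $N<12$ is left open.
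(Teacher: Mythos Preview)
Your ansatz $\Phi_k=\sqrt{\rho/N}\,\zeta_k\,\re^{\ri\phi_k}$ with the \emph{same} amplitude $\sqrt{\rho/N}$ for every $k$ cannot work as stated. With equal amplitudes the phase contribution to the current is $(\rho/N)\sum_k\nabla\phi_k=(\rho/N)\,\nabla\bigl(\sum_k\phi_k\bigr)$, so the associated velocity $\rho^{-1}\cdot(\rho/N)\sum_k\nabla\phi_k$ is a pure gradient and hence curl-free. The whole mechanism of the Lieb--Schrader construction is that the amplitude weights $\eta_k(\br)$ in $\Phi_k=\sqrt{\eta_k\rho}\,\re^{\ri u_k}$ are \emph{not} constant: the vorticity of $\sum_k\eta_k\nabla u_k$ is produced by the cross terms $\nabla\eta_k\times\nabla u_k$, and their explicit $\eta_k$ are engineered so that four orbitals suffice to realize any $\bw$ obeying~\eqref{eq:Liebcond}. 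Your equal-amplitude scheme therefore offloads \emph{all} of the vorticity onto $\bj^{\mathrm{spin}}$, and you give no argument that a smooth frame $\{\zeta_k\}$ can be chosen so that $(\rho/N)\sum_k\Im(\zeta_k^*\nabla\zeta_k)$ matches a prescribed vector field, nor---separately---that $\curl\bigl(\rho^{-1}(\bj-\bj^{\mathrm{spin}})\bigr)$ inherits the decay~\eqref{eq:Liebcond}, which you would still need for the phase step.

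The paper's route is structurally different and sidesteps both issues. One first splits $R=R_1+R_2+R_3$ with each $R_k\in\cC_{N_k}^0$ of \emph{rank one} pointwise and $N_k\ge4$; this is obtained from the decomposition $R=R^\spinup+R^\spindown$ of~\eqref{eq:decomposition_convex} together with spatial cut-offs that redistribute the mass into three integer pieces, and it is precisely this $3\times4$ count that gives the threshold $N\ge12$---not the frame-regularity bookkeeping you describe. Because each $R_k$ is rank one there is a single spinor direction per piece, and the orbitals are $\Phi_{k,j}=\sqrt{\eta_{k,j}}\,(\text{that spinor})\,\re^{\ri u_{k,j}}$ with the genuine Lieb--Schrader weights $\eta_{k,j}$. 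The accompanying decomposition $\bj=\bj_1+\bj_2+\bj_3$ is arranged so that the spin contribution cancels exactly, and the phase equation in each block becomes $\rho_k\sum_j\eta_{k,j}\nabla u_{k,j}=\rho_k\,\bj/\rho$: the right-hand side involves the \emph{original} velocity $\bj/\rho$, so hypothesis~\eqref{eq:Liebcond} applies verbatim and no auxiliary vorticity ever needs to be controlled.
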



\begin{remark}
	The conditions~\eqref{eq:Liebcond} are the ones found in~\cite{Lieb2013}. The authors conjectured that this condition "can be considerably loosened".
\end{remark}

\begin{remark}
	We were only able to prove this theorem for $N \ge 12$. In~\cite{Lieb2013}, the authors proved that conditions~\eqref{eq:Liebcond} were not sufficient for $N=2$. We do not know whether conditions~\eqref{eq:Liebcond} are sufficient in the case $3 \le N \le 11$.
\end{remark}

Let us finally turn to the mixed-state case. We notice that if $(R, \bj)$ is representable by a Slater determinant $\sS[\Phi_1, \ldots, \Phi_N]$, then, for all $k \in \N^*$, the pair $(k/N) (R, \bj)$ is mixed-state representable, where $N$ is the number of orbitals (simply take the uniform convex combination of the pairs represented by $\sS[\Phi_1]$, $\sS[\Phi_2]$, etc.). In particular, from Theorem~\ref{th:CSDFT}, we deduce the following corollary.
\begin{corollary}[CSDFT, case mixed-state] $\,$ \\
A sufficient set of conditions for a pair $(R, \bj)$ to be mixed-state representable is $R \in \cC_N^0$ for some $N \in \N^*$, $\bj$ satisfies $ | \bj |^2 / \rho \in L^1(\R^3)$, and~\eqref{eq:Liebcond} holds for some $\delta>0$.
\end{corollary}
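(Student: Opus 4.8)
The plan is to obtain the corollary as a direct consequence of Theorem~\ref{th:CSDFT} by a rescaling argument, combined with the observation made in the paragraph just before the statement: if a pair $(R',\bj')$ is representable by a Slater determinant of $M'$ orbitals, then for every $k\in\N^*$ the pair $\tfrac{k}{M'}(R',\bj')$ is mixed-state representable. Concretely one may take the uniform convex combination of the $\binom{M'}{k}$ Slater determinants obtained by selecting $k$ of the $M'$ orbitals; by \eqref{eq:DM_Slater} and \eqref{eq:j_Slater} their associated spin-densities and currents add up to $\tfrac{k}{M'}(R',\bj')$, and the resulting object lies in $G_k^{\mixed}$.

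Given $(R,\bj)$ with $R\in\cC_N^0$, $|\bj|^2/\rho\in L^1(\R^3)$ and \eqref{eq:Liebcond} holding for some $\delta>0$, I would first fix $M=\lceil 12/N\rceil\ge 1$, so that $MN\ge 12$, and pass to the rescaled pair $(MR,M\bj)$. The second step is to check that this pair satisfies the hypotheses of Theorem~\ref{th:CSDFT} with $MN$ in place of $N$. Indeed $MR$ is hermitian and nonnegative, $\int_{\R^3}\tr_{\C^2}[MR]=MN$, $\sqrt{MR}=\sqrt{M}\,\sqrt{R}\in\cM_{2\times2}(H^1(\R^3,\C))$ and $\det(MR)=M^2\det R\equiv 0$, hence $MR\in\cC_{MN}^0\subset\cC_{MN}$ with $MN\ge 12$; the current condition persists since $|M\bj|^2/(M\rho)=M\,|\bj|^2/\rho\in L^1(\R^3)$; and the vorticity of the rescaled pair is
\[
	\curl\left((M\rho)^{-2}(M\bj)\right)=M^{-1}\,\curl\left(\rho^{-2}\bj\right)=M^{-1}\bw ,
\]
so that $\sup_{\br}f(\br)^{(1+\delta)/2}|M^{-1}\bw(\br)|<\infty$ and $\sup_{\br}f(\br)^{(1+\delta)/2}|M^{-1}\nabla\bw(\br)|<\infty$; thus \eqref{eq:Liebcond} holds for $(MR,M\bj)$ with the same $\delta$.

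Theorem~\ref{th:CSDFT} then yields a Slater determinant $\sS[\Phi_1,\ldots,\Phi_{MN}]$ representing $(MR,M\bj)$. Applying the observation above with $M'=MN$ and $k=N$ (legitimate since $N\le MN$) shows that $\tfrac{N}{MN}(MR,M\bj)=(R,\bj)$ is mixed-state representable, which is the assertion.

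I do not anticipate any genuine obstacle here: once the scaling is set up the argument is a one-line reduction to Theorem~\ref{th:CSDFT}. The only matters requiring a little care are the transformation of the conditions \eqref{eq:necessary_j} and \eqref{eq:Liebcond} under $(R,\bj)\mapsto(MR,M\bj)$ --- in particular that the vorticity picks up the factor $M^{-1}$, so the supremum bounds in \eqref{eq:Liebcond} only become easier to satisfy --- and the verification that the uniform convex combination of the $k$-orbital sub-determinants reproduces exactly $\tfrac{k}{MN}(MR,M\bj)$ and defines an admissible element of $G_k^{\mixed}$; both are elementary.
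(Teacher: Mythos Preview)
Your proof is correct and follows exactly the approach the paper intends: rescale $(R,\bj)$ by an integer $M$ with $MN\ge 12$, apply Theorem~\ref{th:CSDFT}, and then recover $(R,\bj)$ as the uniform convex combination of the $N$-orbital sub-determinants, which is precisely the observation in the paragraph preceding the corollary. You are in fact slightly more careful than the paper, spelling out the $\binom{MN}{N}$ convex combination and checking that the hypotheses of Theorem~\ref{th:CSDFT} survive the rescaling (note that the zero-determinant condition $R\in\cC_N^0$ is actually not used when invoking the theorem, since only $MR\in\cC_{MN}$ is required there).
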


In~\cite{Tellgren2014}, the authors provide different sufficient conditions than~\eqref{eq:Liebcond} for a pair $(\rho, \bj)$ to be mixed-state representable, where $\rho$ is the electronic density. They proved that if
	\[
		(1 + | \cdot |^2) \rho \left| \nabla (\rho^{-1} \bj ) \right|^2 \in L^1(\R^3),
	\]
	then the pair $(\rho, \bj)$ is mixed-state representable. Their proof can be straightforwardly adapted for the representability of the pair $(R, \bj)$, so that similar results hold. The details are omitted here for the sake of brevity.



\section{Proofs}
\label{sec:proofs}

\subsection{Proof of Theorem~\ref{th:SDFT}}
\label{sec:proof_SDFT}

The mixed-state case was already proved in~\cite{Gontier2013}. We focus on the pure-state representability. \\

\noindent \textbf{Case $N=1$}\\
The fact that $\cJ_1^\slater = \cJ_1^\pure$ simply comes from the fact that $G_1^\slater = G_1^\pure$. To prove $\cJ_1^\slater \subset \cC_1^0$, we let $R \in \cJ_1^\slater$ be represented by $\Phi = (\phi^\spinup, \phi^\spindown)^T \in H^1(\R^3, \C^2)$, so that
\[
	R = \begin{pmatrix}  | \phi^\spinup |^2 & \phi^\spinup \overline{\phi^\spindown} \\
		\phi^\spindown \overline{\phi^\spinup} & | \phi^\spindown |^2
		\end{pmatrix}.
\]
Since $R \in \cJ_1^\slater \subset \cJ_1^\mixed = \cC_1$ and $\det R \equiv 0$, we deduce $R \in \cC_1^0$. \\
We now prove that $\cC_1^0 \subset \cJ_1^\slater$. Let $R = \begin{pmatrix} \rho^\spinup & \sigma \\ \overline{\sigma} & \rho^\spindown \end{pmatrix} \in \cC_1^0$. From $\det R \equiv 0$ and Lemma~\ref{lem:CN}, we get
\begin{equation} \label{eq:conditions0}
	\left\{ \begin{aligned}
	& \rho^{\spinup/\spindown} \ge 0,\quad \rho^\spinup \rho^\spindown = | \sigma |^2, \quad \int_{\R^3} \rho^\spinup + \int_{\R^3} \rho^\spindown = 1, \\
	& \sqrt{\rho^{\spinup/ \spindown}} \in  H^1(\R^3), \quad \sigma \in W^{1, 3/2}(\R^3), \\
	& | \nabla \sigma |^2 / \rho  \in L^1(\R^3). \\
	\end{aligned} \right.
\end{equation}
There are two natural choices that we would like to make for a representing orbital, namely
\begin{equation} \label{eq:Phi12}
	\Phi_1 = \begin{pmatrix} \sqrt{\rho^\spinup}, & \dfrac{\overline{\sigma}}{\sqrt{\rho^{\spinup}}} \end{pmatrix}^T
	\quad \text{and} \quad
	\Phi_2 = \begin{pmatrix} \dfrac{\sigma}{\sqrt{\rho^\spindown}}, & \sqrt{\rho^\spindown} \end{pmatrix}^T.
\end{equation}
Unfortunately, it is not guaranteed that these orbitals are indeed in $H^1(\R^3, \C^2)$. It is the case only if $| \nabla \sigma |^2 / \rho^{\spindown}$ is in $L^1(\R^3)$ for $\Phi_1$, and if $| \nabla \sigma |^2 / \rho^{\spinup}$ is in $L^1(\R^3)$ for $\Phi_2$. Due to~\eqref{eq:conditions0}, we know that $| \nabla \sigma |^2 / \rho \in L^1(\R^3)$. The idea is therefore to interpolate between these two orbitals, taking $\Phi_1$ in regions where $\rho^\spinup >> \rho^\spindown$, and $\Phi_2$ in regions where $\rho^{\spindown} >> \rho^{\spinup}$. This is done via the following process. \\

Let $\chi \in C^\infty(\R)$ be a non-decreasing function such that $0 \le \chi \le 1$, $\chi(x) = 0$ if $x \le 1/2$ and $\chi(x) = 1$ if $x \ge 1$. We write $\sigma = \alpha + \ri \beta$ where $\alpha$ is the real-part of $\sigma$, and $\beta$ is its imaginary part. We introduce
\[
	\begin{array}{llll}
	\lambda_1 & :=  \dfrac{\sqrt{ \alpha^2 + \chi^2(\rho^\spinup / \rho^\spindown) \beta^2}}{\sqrt{\rho^\spindown}}, & 
	\mu_1  & := \sqrt{1 - \chi^2( \rho^\spinup / \rho^\spindown)} \dfrac{\beta}{\sqrt{\rho^\spindown}}, \\
	\lambda_2 & := \dfrac{\alpha \lambda_1 + \beta \mu_1}{\rho^\spinup},  
	& \mu_2  & := \dfrac{\beta \lambda_1 - \alpha \mu_1}{\rho^\spinup},
	\end{array}
\]
and we set
\[
	\phi^\spinup := \lambda_1 + \ri \mu_1 \quad \text{and} \quad \phi^\spindown := \lambda_2 + \ri \mu_2.
\]
Let us prove that $\Phi$ represents $R$ and that $\Phi := (\phi^\spinup, \phi^\spindown) \in \cW_1^\slater$. First, an easy calculation shows that
\begin{align*}
	| \phi^\spinup |^2 & = \lambda_1^2 + \mu_1^2 = \dfrac{\alpha^2 + \chi^2 \beta^2 + (1 - \chi^2) \beta^2}{\rho^\spindown} = \dfrac{| \sigma |^2}{\rho^\spindown} = \rho^\spinup, \\
	| \phi^\spindown |^2 &  = \dfrac{(\alpha^2 + \beta^2)(\lambda_1^2 + \mu_1^2)}{(\rho^\spinup)^2} = \dfrac{| \sigma |^2}{\rho^{\spinup}} = \rho^\spindown, \\
	\Re \left( \phi^\spinup \overline{\phi^\spindown} \right) & = \lambda_1 \lambda_2 - \mu_1 \mu_2 = \dfrac{\alpha (\lambda_1^2  + \mu_1^2)}{\rho^\spinup} = \alpha, \\
	\Im \left( \phi^\spinup \overline{\phi^\spindown}  \right) & = \lambda_1 \mu_2 + \lambda_2 \mu_1 = \dfrac{\beta(\lambda_1^2 + \mu_1^2)}{\sqrt{\rho^\spinup}} = \beta,
\end{align*}
so that $\Phi \in L^2(\R^3, \C^2)$ with $\| \Phi \| = 1$, and $\Phi$ represents $R$. To prove that $\Phi \in \cW_1^\slater$, we need to check that $\lambda_1, \lambda_2, \mu_1$ and $\mu_2$ are in $H^1(\R^3)$. For $\lambda_1$, we choose another non-increasing function $\xi \in C^\infty(\R)$ such that $0 \le \xi \le 1$, $\xi(x) = 0$ for $x \le 1$, and $\xi(x) = 1$ for $x \ge 2$. Note that $(1 - \chi)\xi \equiv 0$. It holds that 
\begin{equation} \label{eq:nablaLambda}
	\nabla \lambda_1 = (1 - \xi^2(\rho^\spinup / \rho^\spindown)) \nabla \lambda_1 +  \xi^2(\rho^\spinup / \rho^\spindown) \nabla \lambda_1.
\end{equation}
The second term in the right-hand side of~\eqref{eq:nablaLambda} is non-null only if $\rho^\spinup \ge \rho^\spindown$, so that $\chi(\rho^\spinup / \rho^\spindown) = 1$ on this part. In particular, from the equality $\rho^\spinup \rho^\spindown = | \sigma |^2$, we get
\[
	 \xi^2(\rho^\spinup / \rho^\spindown) \lambda_1 =  \xi^2(\rho^\spinup / \rho^\spindown)  \dfrac{| \sigma |}{\sqrt{\rho^\spindown}} =  \xi^2(\rho^\spinup / \rho^\spindown) \sqrt{\rho^\spinup},
\]
and similarly,
\[
	 \xi^2(\rho^\spinup / \rho^\spindown) \nabla \lambda_1  = \xi^2(\rho^\spinup / \rho^\spindown) \nabla \sqrt{\rho^\spinup},
\]
which is in $L^2(\R^3)$ according to~\eqref{eq:conditions0}. On the other hand, the first term in the right-hand side of~\eqref{eq:nablaLambda} is non-null only if $\rho^\spinup \le 2 \rho^\spindown$, so that $(1/3) \rho \le \rho^\spindown$ on this part. In particular, from the following point-wise estimate
\[
	| \nabla \sqrt{f + g} | \le | \nabla \sqrt{f} | + | \nabla \sqrt{g} | ,
\]
which is valid almost everywhere whenever $f, g \ge 0$, the inequality $(a + b)^2 \le 2(a^2 + b^2)$, and the fact that $\alpha^2 + \chi^2 \beta^2 \le | \sigma |^2$, we get on this part (we write $\chi$ for $\chi(\rho^\spinup / \rho^\spindown)$)
\begin{align*}
	 \left| \nabla \lambda_1 \right|^2 
	& = \left| \dfrac{\sqrt{\rho^\spindown} \nabla \sqrt{  \alpha^2 + \chi^2 \beta^2  } - \sqrt{  \alpha^2 + \chi^2\beta^2 } \nabla \sqrt{\rho^\spindown}}{\rho^\spindown} \right|^2 \\
	& \le
	 2 \left( \dfrac{ | \nabla \sqrt{\alpha^2 +  \chi^2 \beta^2} |^2 } {\rho^\spindown} 
  + \dfrac{(\alpha^2 + \chi^2 \beta^2)}{(\rho^\spindown)^2} { | \nabla \sqrt{\rho^\spindown} |^2} \right) \\
	& \le
	2 \left( \dfrac{| \nabla \alpha |^2}{\rho^\spindown} 
	+ \dfrac{2 \left| \nabla \chi \frac{\rho^\spindown \nabla \rho^\spinup - \rho^\spinup \nabla \rho^\spindown}{(\rho^\spindown)^2} \right|^2 \beta^2   }{\rho^\spindown} \right. + \\
	&\qquad  \left. + \dfrac{2 \chi^2 | \nabla \beta |^2}{\rho^\spindown}
	+ \dfrac{2 | \sigma |^2}{(\rho^\spindown)^2}  | \nabla \sqrt{\rho^\spindown} |^2
	\right).
\end{align*}
We finally use the inequality $(\rho^\spindown )^{-1} \le (3/\rho)$, and the inequality $| \sigma |^2 / (\rho^{\spindown})^2 = \rho^{\spinup} / \rho^\spindown \le 2$ and get
\begin{align*}
	\left| \nabla \lambda_1 \right|^2 \le
	C & \left(  \dfrac{| \nabla \alpha |^2}{\rho} +  \| \nabla \chi \|_{L^\infty}^2 \left( \dfrac{| \nabla \rho^\spinup |^2}{\rho^\spinup} +  \dfrac{| \nabla \rho^\spindown |^2}{\rho^\spindown} \right) \right. \\
		& \qquad \left. + \dfrac{| \nabla \beta |^2}{\rho} + | \nabla \sqrt{\rho^\spindown} |^2
	\right).
\end{align*}

The right-hand side is in $L^1(\R^3)$ according to~\eqref{eq:conditions0}. Hence, $(1 - \xi^2(\rho^\spinup / \rho^\spindown)) \left| \nabla \lambda_1 \right| \in L^2(\R^3)$, and finally $\lambda_1 \in H^1(\R^3)$. \\
The other cases are treated similarly, observing that, 
\begin{itemize}
\item whenever $\rho^\spinup \ge \rho^\spindown$, then $\chi = 1$, and $\Phi = \Phi_1$ where $\Phi_1$ was defined in~\eqref{eq:Phi12}. We then control $(\rho^\spinup)^{-1}$ with the inequality $(\rho^\spinup)^{-1} \le 2 \rho^{-1}$ ;

\item whenever $\rho^\spinup \le  \rho^\spindown /2$, then  $\chi = 0$, $\Phi = \Phi_2$. We control $(\rho^\spindown)^{-1}$ with the inequality $(\rho^\spindown)^{-1} \le \frac32 \rho^{-1}$ ;

\item whenever $\rho^\spindown /2 \le \rho^\spinup \le \rho^\spindown$, then both $(\rho^\spinup)^{-1}$ and $\rho^{\spindown}$ are controlled via $(\rho^\spinup)^{-1} \le 3 \rho^{-1}$ and $(\rho^{\spindown})^{-1} \le 2 \rho^{-1} $.
\end{itemize}

\noindent \textbf{Case $N \ge 2$.} \\
Since $\cJ_N^\slater \subset \cJ_N^\pure \subset \cJ_N^\mixed = \cC_N$, it is enough to prove that $\cC_N \subset \cJ_N^\slater$. We start with a key lemma.

\begin{lemma} \label{lem:N+M}
	For all $M, N \in \N^2$, it holds that $\cJ_{N+M}^\slater = \cJ_N^\slater + \cJ_M^\slater$.
\end{lemma}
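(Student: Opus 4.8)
The plan is to prove the two inclusions of this set identity separately; the inclusion $\cJ_{N+M}^\slater \subseteq \cJ_N^\slater + \cJ_M^\slater$ is elementary, whereas the reverse inclusion is where the Lazarev--Lieb orthogonalization (Lemma~\ref{lem:HR}) enters.

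For $\cJ_{N+M}^\slater \subseteq \cJ_N^\slater + \cJ_M^\slater$, I would start from $R = R_\Gamma$ with $\Gamma$ the density matrix of a finite-energy Slater determinant $\sS[\Theta_1,\dots,\Theta_{N+M}]$ built on an orthonormal family. Formula~\eqref{eq:DM_Slater} exhibits $R$ as a sum of $N+M$ one-orbital contributions; grouping the first $N$ and the last $M$ of them, and noting that any subfamily of an orthonormal family is again orthonormal while finiteness of the total kinetic energy forces each $\|\nabla\Theta_k\|_{L^2}^2$ to be finite, I obtain $\sS[\Theta_1,\dots,\Theta_N]\in\cW_N^\slater$ and $\sS[\Theta_{N+1},\dots,\Theta_{N+M}]\in\cW_M^\slater$, so that $R$ is the sum of an element of $\cJ_N^\slater$ and an element of $\cJ_M^\slater$. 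The degenerate cases $N=0$ or $M=0$, where $\cJ_0^\slater=\{0\}$, are trivial.

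For the reverse inclusion $\cJ_N^\slater + \cJ_M^\slater \subseteq \cJ_{N+M}^\slater$, take $R_A = R_{\sS[\Phi_1,\dots,\Phi_N]}$ and $R_B = R_{\sS[\Psi_1,\dots,\Psi_M]}$ with both families orthonormal and of finite kinetic energy. One cannot simply form $\sS[\Phi_1,\dots,\Phi_N,\Psi_1,\dots,\Psi_M]$, because the cross products $\bra\Phi_i | \Psi_j\ket$ need not vanish. The idea is to replace each $\Psi_j$ by $\widetilde\Psi_j := u_j\Psi_j$, with $u_j\in C^\infty(\R^3,\C)$ unimodular and of bounded gradient, chosen to restore orthonormality of the combined family. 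The point is that multiplying an orbital by a unimodular function leaves its contribution in~\eqref{eq:DM_Slater} untouched — $|u_j\psi_j^{\spinup/\spindown}|^2=|\psi_j^{\spinup/\spindown}|^2$ and $(u_j\psi_j^\spinup)\overline{(u_j\psi_j^\spindown)} = \psi_j^\spinup\overline{\psi_j^\spindown}$ — while also preserving $\|\Psi_j\|_{L^2}=1$ and finiteness of the kinetic energy, since $\|\nabla(u_j\Psi_j)\|_{L^2}\le\|\nabla\Psi_j\|_{L^2}+\|\nabla u_j\|_{L^\infty}\|\Psi_j\|_{L^2}$. I would construct the $u_j$ one at a time: having orthonormalized $\Phi_1,\dots,\Phi_N,\widetilde\Psi_1,\dots,\widetilde\Psi_{j-1}$, I seek $u_j$ such that $u_j\Psi_j$ is orthogonal to each of these $N+j-1$ vectors; each such condition reads $\int_{\R^3}u_j\,g=0$ for some $g\in L^1(\R^3,\C)$ (a product of two $L^2$ functions, by Cauchy--Schwarz), and Lemma~\ref{lem:HR} produces a unimodular $u_j$ of the required regularity satisfying all of these finitely many linear constraints simultaneously. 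After $M$ steps the family $\Phi_1,\dots,\Phi_N,\widetilde\Psi_1,\dots,\widetilde\Psi_M$ is orthonormal with finite kinetic energy, so $\sS[\Phi_1,\dots,\Phi_N,\widetilde\Psi_1,\dots,\widetilde\Psi_M]\in\cW_{N+M}^\slater$, and by~\eqref{eq:DM_Slater} it represents $\sum_i R_{\Phi_i} + \sum_j R_{\widetilde\Psi_j} = R_A + R_B$.

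The only genuinely non-trivial ingredient, and the step I expect to be the main obstacle, is guaranteeing at each stage the existence of a phase that is smooth (or at least Lipschitz) enough to keep the modified orbital in $H^1$ while enforcing finitely many $L^1$-orthogonality conditions; this is exactly what the Lazarev--Lieb lemma (Lemma~\ref{lem:HR}) is designed to deliver, so the remaining work is just the bookkeeping of the inductive orthogonalization and the phase-invariance of~\eqref{eq:DM_Slater}.
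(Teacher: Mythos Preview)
Your proposal is correct and follows essentially the same route as the paper: the easy inclusion by splitting the orbitals of a representing Slater determinant into two subfamilies, and the hard inclusion by iteratively applying the Lazarev--Lieb lemma (Lemma~\ref{lem:HR}) to multiply the second family of orbitals by unimodular $C^\infty$ phases with bounded derivatives, thereby restoring orthonormality without altering the spin-density matrix or leaving $H^1$. The only cosmetic difference is that you write the phase as a unimodular function $u_j$ while the paper writes it as $\re^{\ri\widetilde{u}_j}$; the content is identical.
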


\begin{proof} [Proof of Lemma~\ref{lem:N+M}]
	The case $\cJ_{N+M}^\slater \subset \cJ_N^\slater + \cJ_M^\slater$ is trivial: if $R \in \cJ_{N+M}^\slater$ is represented by the Slater determinant $\sS[\Phi_1, \ldots \Phi_{N+M}]$, then, by denoting by $R_1$ (resp. $R_2$) the spin-density matrix associated to the Slater determinant $\sS[\Phi_1, \ldots, \Phi_N]$ (resp. $\sS[\Phi_{N+1}, \ldots, \Phi_{N+1}]$), it holds $R = R_1 + R_2$ (see Equation~\eqref{eq:DM_Slater} for instance), with $R_1 \in \cJ_N^\slater$ and $R_2 \in \cJ_M^\slater$. \\
	
	The converse is more involving, and requires an orthogonalization step. Let $R_1 \in \cJ_N^\slater$ be represented by the Slater determinant $\sS[\Phi_1, \ldots, \Phi_N]$, and $R_2 \in \cJ_M^\slater$ be represented by the Slater determinant $\sS[\widetilde \Phi_1, \ldots, \widetilde \Phi_M]$. We cannot directly consider the Slater determinant $\sS[\Phi_1, \ldots, \Phi_N, \tilde \Phi_1, \ldots, \tilde \Phi_M]$, for $(\Phi_1, \ldots, \Phi_N)$ is not orthogonal to $(\tilde \Phi_1, \ldots, \tilde \Phi_M)$. \\
	
	We recall the following lemma, which is a smooth version of the Hobby-Rice theorem \cite{Hobby1965} (see also \cite{Pinkus1976}), and that was proved by Lazarev and Lieb in \cite{Lazarev2014} (see also \cite{Lieb2013}).
	\begin{lemma}  \label{lem:HR}
	For all $N \in \N^*$, and for all $(f_1, \ldots, f_N) \in L^1(\R^3, \C)$, there exists a function $u \in C^\infty(\R^3)$, with bounded derivatives, such that
	\[
		\forall 1 \le k \le N, \quad \int_{\R^3} f_k \re^{\ri u} = 0.
	\]
	Moreover, $u$ can be chosen to vary in the $r_1$ direction only.
\end{lemma}
	We now modify the phases of $\widetilde{\Phi_1}, \ldots, \widetilde{\Phi_M} $ as follows. First, we choose $\widetilde{u_1}$ as in Lemma~\ref{lem:HR} such that, 
	\[
		\forall 1 \le k \le N, \quad \int_{\R^3} \left( \overline{\phi_k^\spinup} \widetilde{\phi_1^\spinup} +  \overline{\phi_k^\spindown} \widetilde{\phi_1^\spindown} \right) \re^{\ri \widetilde{u_1}} = 0,
	\]
	and we set $\Phi_{N+1} = \widetilde{\Phi_1} \re^{\ri \widetilde{u_1}}$. Note that, by construction, $\Phi_{N+1}$ is normalized, in $H^1(\R^3, \C^2)$, and orthogonal to $(\Phi_1, \ldots, \Phi_N)$. We then construct $\widetilde{u_2}$ as in Lemma~\ref{lem:HR} such that
	\[
		\forall 1 \le k \le N+1, \quad \int_{\R^3} \left( \overline{\phi_k^\spinup} \widetilde{\phi_2^\spinup} +  \overline{\phi_k^\spindown} \widetilde{\phi_2^\spindown} \right) \re^{\ri \widetilde{u_2}} = 0,
	\]
	and we set $\Phi_{N+2} = \widetilde{\Phi_2} \re^{\ri \widetilde{u_2}}$. We continue this process for $3 \le k \le M$ and construct $\Phi_{N+k} = \widetilde{\Phi_k} \re^{\ri \widetilde{u_k}}$. We thus obtain an orthonormal family $(\Phi_1, \ldots, \Phi_{N+M})$. By noticing that the spin-density matrix of the Slater determinant $\sS[ \widetilde{\Phi_1}, \ldots, \widetilde{ \Phi_M}]$ is the same as the one of $\sS[\Phi_{N+1}, \ldots, \Phi_{N+M}]$ (the phases cancel out), we obtain that $R = R_1 + R_2$, where $R$ is the spin-density matrix represented by $\sS[\Phi_{1}, \ldots, \Phi_{N+M}]$. The result follows.
	
\end{proof}

We now prove that $\cC_N \subset \cJ_N^\slater$. We start with the case $N=2$.\\

\noindent \textbf{Case $N = 2$.}\\
Let $R = \begin{pmatrix} \rho^\spinup & \sigma \\ \overline{\sigma} & \rho^\spindown \end{pmatrix} \in \cC_2$. We write $\sqrt{R} =  \begin{pmatrix} r^\spinup & s \\ \overline{s} & r^\spindown \end{pmatrix}$, with $r^{\spinup}, r^{\spindown} \in \left( H^1(\R^3) \right)^2$ and $s$ in $H^1(\R^3, \C)$. Let
\begin{equation} \label{eq:decomposition_convex}
	R^\spinup := \begin{pmatrix} | r^{\spinup} |^2   & s r^\spinup  \\ \overline{s} r^\spinup & | s |^2 \end{pmatrix} 
	\quad \text{and} \quad
	R^\spindown := \begin{pmatrix} | s |^2 & s r^\spindown  \\ \overline{s} r^\spindown & | r^\spindown |^2 \end{pmatrix}.
\end{equation}
It is easy to check $R = R^\spinup + R^\spindown$, that $R^{\spinup / \spindown}$ are hermitian, of null determinant, and $\sqrt{R^{\spinup / \spindown}} \in \cM_{2 \times 2} \left( H^1(\R^3, \C) \right)$. However, it may hold that $\int_{\R^3} \tr_{\C^2} [R^\spinup] \notin \N^*$, so that $R^\spinup$ is not in $\cC_M^0$ for some $M \in \N^*$. \\

The case $R^\spinup = 0$ or $R^\spindown = 0$ are trivial. Let us suppose that, for $\alpha \in \{\spinup, \spindown\}$, $m^\alpha := \int_{\R^3} \rho_{R^\alpha} \neq 0$. In this case, the matrices $\widetilde{R^\alpha} = (m^\alpha)^{-1} R^\alpha$ are in $\cC_1^0$, hence are representable by a single orbital, due to the first statement of Theorem~\ref{th:SDFT}. Let $\widetilde{\Phi} = (\widetilde{\phi_1^\spinup}, \widetilde{\phi_1^\spindown})^T \in H^1(\R^3, \C^2)$ and $\widetilde{\Phi}_2 = (\widetilde{\phi_2^\spinup}, \widetilde{\phi_2^\spindown})^T\in H^1(\R^3, \C^2)$ be normalized orbitals that represent respectively $\widetilde{R^\spinup}$ and $\widetilde{R^\spindown}$. It holds
\[
	\widetilde{\Phi}_1 \widetilde{\Phi}_1^* = \widetilde{R^\spinup} = (m^\spinup)^{-1} R^\spinup
	\quad \text{and} \quad 
	\widetilde{\Phi}_2 \widetilde{\Phi}_2^* =  \widetilde{R^\spindown}  = (m^\spindown)^{-1} R^\spindown.
\]
From the Lazarev-Lieb orthogonalization process (see Lemma~\ref{lem:HR}), there exists a function $u \in C^\infty(\R)$ with bounded derivatives such that
\begin{equation} \label{eq:for_norm}
	\bra \widetilde{\Phi}_1 | \widetilde{\Phi}_2 \re^{\ri u} \ket = \int_{\R^3} \left( \overline{\psi_1^\spinup} {\psi_2^\spinup} + \overline{\psi_1^\spindown} {\psi_2^\spindown} \right) \re^{ \ri u} = 0.
\end{equation}
Once this function is chosen, there exists a function $v \in C^\infty(\R)$ with bounded derivatives such that
\begin{equation} \label{eq:orthogonality12}
	\bra \widetilde{\Phi}_1 | \widetilde{\Phi}_1 \re^{\ri v} \ket = 
	\bra \widetilde{\Phi}_1 | \widetilde{\Phi}_2 \re^{\ri (u+v)} \ket = 
	\bra \widetilde{\Phi}_2 \re^{\ri u} | \widetilde{\Phi}_1 \re^{\ri v} \ket = 
	\bra \widetilde{\Phi}_2 | \widetilde{\Phi}_2 \re^{\ri v} \ket = 0.
\end{equation}
We finally set
\[
	\Phi_1 := \dfrac{1}{\sqrt{2}} \left( \sqrt{m^\spinup} \widetilde{\Phi}_1 + \sqrt{m^\spindown} \widetilde{\Phi}_2 \re^{\ri u}\right)
\]
and
\[
	 \Phi_2 :=  \dfrac{1}{\sqrt{2}} \left( \sqrt{m^\spinup} \widetilde{\Phi}_1 - \sqrt{m^\spindown} \widetilde{\Phi}_2 \re^{\ri u} \right) \re^{\ri v}.
\]
From~\eqref{eq:for_norm}, we deduce $\| \Phi_1 \|^2 = \| \Phi_2 \|^2 = 1$, so that both $\Phi_1$ and $\Phi_2$ are normalized. Also, from~\eqref{eq:orthogonality12}, we get $\bra \Phi_1 | \Phi_2 \ket = 0$, hence $\{ \Phi_1, \Phi_2\}$ is orthonormal. As $\widetilde{\Phi}_1$ and $\widetilde{\Phi}_2$ are in $H^1(\R^3, \C^2)$, and $u$ and $v$ have bounded derivatives, $\Phi_1$ and $\Phi_2$ are in $H^1(\R^3, \C^2)$. Finally, it holds that
\begin{align*}
	& \Phi_1 \Phi_1^* + \Phi_2 \Phi_2^* =\\
	& \quad = \dfrac12 \left( m^\spinup \widetilde{\Phi}_1 \widetilde{\Phi}_1^* + m^\spindown \widetilde{\Phi}_2 \widetilde{\Phi}_2^* + 2\sqrt{m^\spinup m^\spindown} \Re \left( \widetilde{\Phi}_1 \widetilde{\Phi}_2^* \re^{-\ri u} \right) \right. \\
	& \qquad + \left. m^\spinup \widetilde{\Phi}_1 \widetilde{\Phi}_1^* + m^\spindown \widetilde{\Phi}_2 \widetilde{\Phi}_2^* - 2\sqrt{m^\spinup m^\spindown} \Re \left( \widetilde{\Phi}_1 \widetilde{\Phi}_2^* \re^{-\ri u} \right) \right) \\
	& \quad = m^\spinup \widetilde{\Phi}_1 \widetilde{\Phi}_1^* + m^\spindown \widetilde{\Phi}_2 \widetilde{\Phi}_2^* = R.
\end{align*}
We deduce that the Slater determinant $\sS [ \Phi_1, \Phi_2 ]$ represents $R$, so that $R \in \cJ_2^\slater$. Altogether, $\cC_2 \subset \cJ_2^\slater$, and therefore $\cC_2 = \cJ_2^\slater$. \\

\noindent \textbf{Case $N > 2$.}\\
We proceed by induction. Let $R \in \cC_{N+1}$ with $N \ge 2$, and suppose $\cC_N = \cJ_N^\slater$. We use again the decomposition~\eqref{eq:decomposition_convex} and write $R = R^\spinup + R^\spindown$, where $R^{\spinup / \spindown}$ are two null-determinant hermitian matrices. For $\alpha  \in \{ \spinup, \spindown\}$, we note $m^\alpha := \int_{\R^3} \rho_{R^\alpha}$. Since $m^\spinup + m^\spindown = N+1 \ge 3$, at least $m^\spinup$ or $m^\spindown$ is greater than $1$. Let us suppose without loss of generality that $m^\spinup \ge 1$. We then write $R = R_1 + R_2 $ with
\[
	R_1 := (m^\spinup)^{-1} R^\spinup
	\quad \text{and} \quad
	R_2 :=  \left( \left( 1 - (m^\spinup)^{-1} \right) R^\spinup + m^\spindown R^\spindown \right).
\]
It holds that $R_1 \in \cC_1^0 = \cJ_1^\slater$ and $R_2 \in \cC_N = \cJ_N^\slater$ (by induction). Together with Lemma~\ref{lem:N+M}, we deduce that $R \in \cJ_{N+1}^\slater$. The result follows.


\subsection{Proof of Lemma~\ref{lem:CSDFT_N=1}}
\label{sec:proof_CSDFT_N=1}

Let $\Phi = (\phi^\spinup, \phi^\spindown) \in H^1(\R^3, \C^2)$ having well-defined global phases in $C^1(\R)$, and let $(R, \bj)$ be the associated spin-density matrix and paramagnetic current. It holds
\[
	R = \begin{pmatrix} \rho^\spinup & \sigma \\ \overline{\sigma} & \rho^\spindown \end{pmatrix} = 
	\begin{pmatrix} | \phi^\spinup |^2 & \phi^\spinup \overline{\phi^\spindown} \\
		\phi^\spindown \overline{\phi^\spinup} & | \phi^\spindown |^2 \end{pmatrix}.
\]
For $\alpha \in \{ \spinup, \spindown\}$, we let $\tau^\alpha$ be the phase of $\phi^\alpha$, so that $\phi^\alpha = \sqrt{\rho^\alpha} \re^{\ri \tau^\alpha}$. Setting $\tau = \tau^\spinup - \tau^\spindown$, we obtain $\sigma = | \sigma | \re^{\ri \tau} = \sqrt{\rho^\spinup \rho^\spindown} \re^{\ri \tau}$. The paramagnetic current is
\[
	\bj = \rho^\spinup \nabla \tau^\spinup + \rho^\spindown \nabla \tau^\spindown =  \rho \nabla \tau^\spindown + \rho^\spinup \nabla \tau =  \rho \nabla \tau^\spinup - \rho^\spindown \nabla \tau.
\]
In particular, using~\eqref{eq:nabla_tau}, 
\begin{equation} \label{eq:nabla_tau_spindown}
	 \dfrac{\bj}{\rho} - \dfrac{\Im (\overline{\sigma} {\nabla \sigma})}{\rho \rho^\spindown} = \dfrac{\bj - \rho^\spinup \nabla \tau}{\rho}  =  \nabla \tau^\spindown 
\end{equation}
is curl-free, and so is
\begin{equation} \label{eq:nabla_tau_spinup}
	\dfrac{\bj}{\rho} + \dfrac{\Im (\overline{\sigma} {\nabla \sigma})}{\rho \rho^\spinup} = \nabla \tau^\spinup.
\end{equation} 
%

%
%

\subsection{Proof of Theorem~\ref{th:CSDFT}}
\label{sec:proof_CSDFT}

We break the proof in several steps. \\

\noindent \textbf{Step 1: Any $R \in \cC_N$ can be written as $R = R_1 + R_2 + R_3$ with $R_k \in \cC_{N_k}^0$, $N_k \ge 4$.}\\

Let $R = \begin{pmatrix} \rho^\spinup & \sigma \\ \overline{\sigma} & \rho^\spindown \end{pmatrix} \in \cC_N$, with $N \ge 12$. We write $\sqrt{R} =  \begin{pmatrix} r^\spinup & s \\ \overline{s} & r^\spindown \end{pmatrix}$, with $r^{\spinup}, r^{\spindown} \in \left( H^1(\R^3) \right)^2$ and $s$ in $H^1(\R^3, \C)$. We write $R = R^\spinup + R^\spindown$ where $R^{\spinup / \spindown}$ were defined in~\eqref{eq:decomposition_convex}.
As in the proof of Theorem~\ref{th:SDFT} for the case $N=2$, $R^{\spinup / \spindown}$ are hermitian, of null determinant, and $\sqrt{R^{\spinup / \spindown}} \in \cM_{2 \times 2} \left( H^1(\R^3, \C) \right)$. However, it may hold that $\int \tr_{\C^2} [R^\spinup] \notin \N^*$, so that $R^\spinup$ is not in $\cC_M^0$ for some $M \in \N^*$. In order to handle this difficulty, we will distribute the mass of $R^\spinup$ and $R^\spindown$ into three density-matrices. \\
More specifically, let us suppose without loss of generality that $\int \tr_{\C^2} [R^\spinup] \ge \int \tr_{\C^2} [R^\spindown] $. We set
\begin{equation} \label{eq:decomposition123}
\begin{aligned}
	R_1 & = (1 - \xi_1) R^\spinup + \xi_2 R^\spindown \\
	R_2 & = \xi_1 (1 - \xi_3) R^\spinup \\
	R_3 & = (1 - \xi_2) R^\spindown + \xi_3 R^\spinup,
\end{aligned}
\end{equation}
where $\xi_1, \xi_2, \xi_3$ are suitable non-decreasing functions in $C^\infty(\R^3)$, that depends only on (say) $r_1$, such that, for $1 \le k \le 3$, $0 \le \xi_k \le 1$. We will choose them of the form $\xi_k(\br) = 0$ for $x_1 < \alpha_k$ and $\xi_k(\br) = 1$ for all $x_1 \ge \beta_k > \alpha_k$, and such that
\begin{equation} \label{eq:(1-xi)xi}
	(1 - \xi_1)\xi_2 = (1 - \xi_2) \xi_3 = (1 - \xi_1) \xi_3 = 0.
\end{equation}
Finally, these functions are tuned so that $\int \tr_{\C^2}(R_k) \in \N^*$ and $\int \tr_{\C^2}(R_k) \ge 4$ for all $1 \le k \le 3$ (see Figure~\ref{fig:cut_off} for an example of such a triplet $(\xi_1, \xi_2, \xi_3)$). Although it is not difficult to convince oneself that such functions $\xi_k$  exist, we provide a full proof of this fact in the Appendix.\\

From~\eqref{eq:(1-xi)xi}, it holds that for all $1 \le k \le 3$, $R_k \in \cC_{N_K}^0$, and that $R_1 + R_2 + R_3 = R^\spinup + R^\spindown = R$.

\begin{figure}[!h]
\begin{center}
\begin{tikzpicture}[scale=0.5]

	\draw[->] (-7, 4) -> (7, 4);
	\draw[black, line width=2] (-7, 6) -- (-4, 6);
	\draw[black, line width=2] (-4,6)	.. controls (-3,6) and (-3,4) .. (-2,4);
	
	\draw[gray, line width=2] (1, 6) -- (7, 6);
	\draw[gray, line width=2] (-1,4)	.. controls (0,4) and (0,6) .. (1,6);
	
	
	\node at (-6, 5) {$(1 - \xi_1)$};
	\node at (2.5, 5) {$\xi_2$};

	\draw (0, 7) -- (0, -5); 
	\draw (-3, 7) -- (-3, -5);
	\draw (4, 7) -- (4, -5);

	\draw[->] (-7, 0) -> (7, 0);
	\draw[black, line width=2] (-2, 2) -- (3, 2);
	\draw[black, line width=2] (-4,0)	.. controls (-3,0) and (-3,2) .. (-2,2);
	\draw[black, line width=2] (3,2)	.. controls (4,2) and (4,0) .. (5,0);
	
	\node at (0, 1) {$ \xi_1 (1 -  \xi_3)$};

	\draw[->] (-7, -4) -> (7, -4);
	\draw[gray, line width=2] (-7, -2) -- (-1, -2);
	
	\draw[gray, line width=2] (-1,-2)	.. controls (0,-2) and (0,-4) .. (1,-4);
	\draw[black, line width=2] (5, -2) -- (7, -2);
	\draw[black, line width=2] (3, -4) ..controls (4,-4) and (4, -2) .. (5, -2);

	\node at (-4.5, -3) {$(1 - \xi_2)$};
	\node at (6, -3) {$\xi_3$};
	
	\node at (-8.5, 5) {(a)};
	\node at (-8.5, 1) {(b)};
	\node at (-8.5, -3) {(c)};

\end{tikzpicture}
\caption{Weights of the matrices $R^\spinup$ (black) and $R^\spindown$ (gray) in (a) $R_1 = (1 - \xi_1)R^\spinup + \xi_2 R^\spindown$, (b) $R_2 = \xi_2 (1 - \xi_3) R^\spinup$ and $R_3 = (1 - \xi_2)R^\spinup + \xi_3 R^\spindown$.
}
\label{fig:cut_off}
\end{center}
\end{figure}

\noindent \textbf{Step 2 : The pair $(R_1, \bj_1)$ is representable by a Slater determinant.} \\
In order to simplify the notation, we introduce the total densities of $R^\spinup$ and $R^\spindown$:
\[
	f^\spinup := | r^\spinup |^2 + | s |^2 \quad \text{and} \quad f^\spindown := | r^\spindown |^2 + | s |^2.
\]
Recall that $\rho = f^\spinup + f^\spindown$. We consider the previous decomposition $R = R_1 + R_2 + R_3$, and we decompose $\bj$ in a similar fashion. More specifically, we write $\bj = \bj_1 + \bj_2 + \bj_3$ with
\begin{equation} \label{eq:j123}
\begin{aligned}
	\bj_1 & = (1 - \xi_1)  \left( \dfrac{f^\spinup}{\rho} \bj - \Im(\overline{s} \nabla s) \right) + \xi_2 \left( \dfrac{f^\spindown}{\rho} \bj + \Im(\overline{s} \nabla s) \right), \\
	\bj_2 & = \xi_1 (1 - \xi_3)  \left( \dfrac{f^\spinup}{\rho} \bj - \Im(\overline{s} \nabla s) \right), \\
	\bj_3 & = (1 - \xi_2) \left( \dfrac{f^\spindown}{\rho} \bj + \Im(\overline{s} \nabla s) \right) + \xi_3  \left( \dfrac{f^\spinup}{\rho} \bj - \Im(\overline{s} \nabla s) \right).
\end{aligned}
\end{equation}
Let us show that the pair $(R_1, \bj_1)$ is representable. Following~\cite{Lieb2013}, we introduce
\[
	\xi(x) = \dfrac{1}{m} \int_{-\infty}^x \dfrac{1}{(1 + y^2)^{(1 + \delta)/2}} \rd y,
\]
where $\delta$ is the one in~\eqref{eq:Liebcond}, and $m$ is a constant chosen such that $\xi(\infty) = 1$. We then introduce
\begin{equation} \label{eq:def_etak}
\begin{aligned}
	\eta_{1,1}(\br) & = \dfrac{2}{N} \xi(\br + \alpha), \\
	\eta_{1,2}(\br) & = \dfrac{2}{N-1} \xi( x_1 + \beta) (1 - \eta_1(\br)), \\
	\eta_{1,3}(\br) & = \dfrac{2}{N-2} \xi(x_2 + \gamma) (1 - \eta_1(\br) - \eta_2(\br)), \\
	\eta_{1,k}(\br) & = \dfrac{1}{N-3} (1 - \eta_1(\br) - \eta_2(\br) - \eta_3(\br)) \quad \text{for} \quad 4 \le k \le N,
\end{aligned}
\end{equation}
where $\alpha, \beta, \gamma$ are tuned so that, if $\rho_1 := \tr_{\C^2}R_1$ denotes the total density of $R_1$, 
\begin{equation} \label{eq:normalized_etaK}
	\forall 1 \le k \le N_k, \quad \int_{\R^3} \eta_{1,k} \rho_1 = 1.
\end{equation}
It can be checked (see~\cite{Lieb2013}) that $\eta_{1,k} \ge 0$ and $\sum_{k=1}^N \eta_{1,k} = 1$.
We seek orbitals of the form
\[
	\Phi_{1,k} := \sqrt{\eta_{1,k}} \left( \sqrt{(1 - \xi_1)} \begin{pmatrix} r^\spinup \\ \overline{s} \end{pmatrix} + \sqrt{\xi_2} \begin{pmatrix} s \\ r^\spindown \end{pmatrix} \right) \re^{\ri u_{1,k}}, \ 1 \le k \le N_1,
\]
and where the phases $u_{1,k}$ will be chosen carefully later. From~\eqref{eq:(1-xi)xi}, we recall that $(1 - \xi_1) \xi_2= 0$, so that, by construction, $\Phi_{1, k}$ is normalized, and
\[
	\Phi_{1, k} \Phi_{1, k}^* = \eta_{1,k} R_1.
\]
Let us suppose for now that the phases $u_{1,k}$ are chosen so that the orbitals are orthogonal. This will indeed be achieved thanks to the Lazarev-Lieb orthogonalization process (see Lemma~\ref{lem:HR}). Then, $\Psi_1 := \cS[\Phi_{1, 1}, \ldots, \Phi_{1, N}]$ indeed represents the spin-density matrix $R_1$. The paramagnetic current of $\Psi$ is (we recall that $r^\spinup$ and $r^\spindown$ are real-valued, and we write $s = | s | \re^{\ri \tau}$ for simplicity)
\begin{align*}
	\hspace{-0.5em} \bj_{\Psi} & = \sum_{k=1}^{N_1} \eta_{1,k} (1 - \xi_1) \left( | r^\spinup |^2 \nabla u_{1,k} + | s |^2 \nabla (-\tau + u_{1,k}) \right) +\\
		& \quad + \sum_{k=1}^{N_1} \eta_{1,k} \xi_2 \left( | s |^2 \nabla(\tau + u_{1,k}) + | r^\spindown |^2 \nabla u_{1,k} \right) \\ 
		 & = \left( (1 - \xi_1) f^\spinup + \xi_2 f^\spindown \right) \left( \sum_{k=1}^{N_1} \eta_{1,k} \nabla u_{1,k} \right) + 
		 \left( \xi_2 - (1 - \xi_1) \right)| s |^2 \nabla \tau.
\end{align*}
 Since $| s |^2 \nabla \tau = \Im (\overline{s} \nabla s)$, this current is equal to the target current $\bj_1$ defined in~\eqref{eq:j123} if and only if
\begin{equation} \label{eq:etaK_uK}
	\rho_1 \dfrac{\bj}{\rho} = \rho_1 \sum_{k=1}^{N_1} \eta_k \nabla u_{1,k}.
\end{equation}
In~\cite{Lieb2013}, Lieb and Schrader provided an explicit solution of this system when $N_1 \ge 4$. We do not repeat the proof, but emphasize on the fact that because condition~\eqref{eq:Liebcond} is satisfied by hypothesis, the phases $u_{1,k}$ can be chosen to be functions of $r_1$ only, and to have bounded derivatives. In particular, the functions $\Phi_{1, k}$ are in $H^1(\R^3, \C^2)$. Also, as their proof relies on the Lazarev-Lieb orthogonalization process, it is possible to choose the phases $u_{1,k}$ so that the functions $\Phi_{1, k}$ are orthogonal, and orthogonal to a finite-dimensional subspace of $L^2(\R^3, \C^2)$. \\

Altogether, we proved that the pair $(R_1, \bj_1)$ is representable by the Slater determinant $\sS[\Phi_{1, 1}, \ldots, \Phi_{1, N_1}]$. \\

\noindent \textbf{Step 3: Representability of $(R_2, \bj_2)$ and $(R_3, \bj_3)$, and finally of $(R, \bj)$.}\\
In order to represent the pair $(R_2, \bj_2)$, we first construct the functions $\eta_{2,k}$ for $1 \le k \le N_2$ of the form~\eqref{eq:def_etak} so that~\eqref{eq:normalized_etaK} holds for $\rho_2 := \tr_{\C^2} R_2$. We then seek orbitals of the form
\[
	\Phi_{2,k} := \sqrt{\eta_{2,k} \xi_1(1 - \xi_3)} \begin{pmatrix} r^\spinup \\ \overline{s} \end{pmatrix} \re^{\ri u_{2,k}}, \quad \text{for} \quad 1 \le k \le N_2.
\]
Reasoning as above, the Slater determinant of these orbitals represents the pair $(R_2, \bj_2)$ if and only if
\[
	\rho_2 \dfrac{\bj_2}{\rho} = \rho_2 \sum_{k=1}^{N_2} \eta_{2,k} \nabla u_{2,k}.
\]
Again, due to the fact that $N_2 \ge 4$, this equation admits a solution. Moreover, it is possible to choose the phases $u_{2,k}$ so that the functions $\Phi_{2,k}$ are orthogonal to the previously constructed $\Phi_{1,k}$. \\

We repeat again this argument for the pair $(R_3, \bj_3)$. Once the new set of functions $\eta_{3,k}$ is constructed, we seek orbitals of the form
\[
	\Phi_{3,k} := \sqrt{\eta_{3,k}} \left( \sqrt{(1 - \xi_2)} \begin{pmatrix} s \\ r^\spindown \end{pmatrix} + \sqrt{\xi_3} \begin{pmatrix} r^\spinup \\ \overline{s} \end{pmatrix} \right) \re^{\ri u_{3,k}}
\]
and construct the phases so that the functions $\Phi_{3,k}$ are orthogonal to the functions $\Phi_{1, k}$ and $\Phi_{2,k}$. \\

Altogether, the pair $(R, \bj)$ is represented by the (finite energy) Slater determinant $\sS[ \Phi_{1,1}, \ldots,  \Phi_{1, N_1}, \Phi_{2,1}, \ldots, \Phi_{2, N_2}, \Phi_{3,1}, \ldots, \Phi_{3, N_3}]$, which concludes the proof.

%
%
%

\section*{Acknowledgements}
I am very grateful to E. Cancès for his suggestions and his help.

%
%

\section{Appendix}

We explain in this section how to construct three functions $\xi_1, \xi_2, \xi_3 \in \left( C^\infty(\R)\right)^3$ like in Figure~\ref{fig:cut_off}. In order to simplify the notation, we introduce
\begin{align*}
	f(r) & := \iint_{\R \times \R} \tr_{\C^2} (R^\spindown) (r, r_2, r_3) \ \rd r_2 \rd r_3, \\ 
	g(r) & := \iint_{\R \times \R}  \tr_{\C^2} (R^\spinup) (r, y, z) \ \rd r_2 \rd r_3,
\end{align*}
where $R^\spinup, R^\spindown$ were defined in~\eqref{eq:decomposition_convex}. We denote by 
\[
	F(\alpha) = \int_{-\infty}^\alpha f(x) \rd x  \quad \text{and} \quad G(\alpha) = \int_{-\infty}^\alpha g(x) \rd x,
\]
and finally $\cF = F(\infty) = \int_\R f$ and $\cG = G(\infty) = \int_\R g$. Note that $F$ and $G$ are continuous non-decreasing functions going from $0$ to $\cF$ (respectively $\cG$), and that it holds $\cF + \cG = N$. Let us suppose without loss of generality that $\cF \le \cG$, so that $0 \le \cF \le N/2 \le \cG \le N$. If $\cF = 0$, then $R^\spindown = 0$ and we can choose $R_1 = R_2 = (4/N)R^\spinup \in \cC_4^0$ and $R_3 = (N-8)/N R^\spinup \in \cC_{N-8}^0$. Since $N \ge 12$, it holds $N - 8 \ge 4$, so that this is the desired decomposition. We now consider the case where $\cF \neq 0$. \\

In order to keep the notation simple, we will only study the case $\cF < 8$ (the case $\cF > 8$ is similar by replacing the integer $4$ by a greater integer $M$ such that $\cF < 2M < N-4$ in the sequel). We seek for $\alpha$ such that
\[
	\left\{	\begin{array}{l}
		\int_{-\infty}^\alpha f(x) \rd x < 4 \quad \text{and} \quad \int_{-\infty}^\alpha f(x) + \int_{\alpha}^\infty g(x) > 4, \\
		~\\
		 \int_{\alpha}^\infty f(x) \rd x < 4 \quad \text{and} \quad \int_{-\infty}^\alpha g(x) \rd x + \int_{\alpha}^\infty f(x) \rd x > 4,
	\end{array} \right.
\]
or equivalently
\begin{equation*}
	\left\{	\begin{array}{l}
		 F(\alpha) < 4 \quad \text{and} \quad F(\alpha) + \cG - G(\alpha) > 4, \\
		 \cF - F(\alpha) <  4 \quad \text{and} \quad \cF - F(\alpha) + G(\alpha) > 4,
	\end{array} \right.
\end{equation*}
that is
\begin{equation} \label{eq:system}
	\cF - 4 < F(\alpha) < 4,
	\quad \text{and} \quad
	F(\alpha) + 4 - \cF < G(\alpha) < F(\alpha) + \cG - 4.
\end{equation}

Let $\alpha_{(\cF-4)}$ be such that $F(\alpha_{(\cF-4)}) = \cF - 4$ (with $\alpha_{(\cF-4)} = - \infty$ if $\cF \le 4$), and $\alpha_{(4)}$ be such that $F(\alpha_{(4)}) = 4$ (with $\alpha_{(4)} = +\infty$ if $\cF \le 4$). As $F$ is continuous non-decreasing, the first equation of~\eqref{eq:system} is satisfied whenever $\alpha_{(\cF-4)} < \alpha < \alpha_{(4)}$.\\
The function $[\alpha_{(\cF-4)}, \alpha_4] \ni \alpha \mapsto m(\alpha) := F(\alpha) + 4 - \cF$ goes continuously and non-decreasingly from $0$ to $8 - \cF$, and the function $[\alpha_{(\cF-4)}, \alpha_4] \ni \alpha \mapsto M(\alpha) :=  F(\alpha) + \cG - 4$ goes continuously and non-decreasingly from $N - 8$ to $\cG$ between $\alpha_{(\cF-4)}$ and $\alpha_{(4)}$. In particular, since $G(\alpha)$ goes continuously and non-decreasingly from $0$ to $\cG$, only three cases may happen: \\
\begin{itemize}
	\item There exists $\alpha_0 \in ( \alpha_{(\cF-4)}, \alpha_{(4)})$ such that $m(\alpha_0) < G(\alpha_0) < M(\alpha_0)$. In this case, \eqref{eq:system} holds for $\alpha = \alpha_0$. By continuity, there exists $\varepsilon > 0$ such that
	\[
		\left\{ \begin{array}{l}
			F(\alpha + \varepsilon) < 4, \\
			F(\alpha) + \cG - G(\alpha + \varepsilon) > 4, \\
			G(\alpha) + \cF - F(\alpha + \varepsilon) > 4.
		\end{array} \right.
	\]
	Let $\xi_2 \in C^\infty(\R)$ be a non-decreasing function such that $\xi_2(x) = 0$ for $x < \alpha$ and $\xi_2(x) = 1$ for $x > \alpha + \varepsilon$. Then, as $0 \le \xi_2 \le 1$, it holds that:
	\[
		\int_{\R} (1 - \xi_2) f \le F(\alpha + \varepsilon) < 4 
	\]
	and
	\[
		\int_{\R} (1 - \xi_2) f + \int_{\alpha + \varepsilon}^\infty g \ge F(\alpha) + \cG - G(\alpha + \varepsilon) > 4.
	\]
	We deduce that there exists an non-decreasing function $\xi_3 \in C^\infty(\R)$ such that $\xi_3(x) = 0$ for $x < \alpha + \varepsilon$, and such that
	\[
		\int_{\R} (1 - \xi_2) f + \xi_3 g = 4.
	\]
	Note that $(1 - \xi_2) \xi_3 = 0$. On the other hand, from
	\[
		\left\{ \begin{array}{l}
			\int_{\R} \xi_2 f \le \cF - F(\alpha) < 4 \\
			~\\
			\int_{\R} \xi_2 f + \int_{-\infty}^\alpha g \ge \cF - F(\alpha + \varepsilon) + G(\alpha) > 4,
		\end{array} \right.
	\]
	we deduce that there exists an non-decreasing function $\xi_1 \in C^\infty(\R)$ such that $\xi_1(x) = 1$ for $x > \alpha$,
	\[
		\int_{\R} (1 - \xi_1) g + \xi_2 f = 4.
	\]
	and $(1 - \xi_1) \xi_2 = (1 - \xi_1) \xi_3 = 0$. Finally, we set
	\begin{align*} 
		R_1 & = (1 - \xi_1) R^\spinup + \xi_2 R^\spindown \\
		R_2 & = \xi_1 (1 - \xi_3) R^\spinup \\
		R_3 & = (1 - \xi_2) R^\spindown + \xi_3 R^\spinup.
	\end{align*}
	By construction, $R = R^\spinup + R^\spindown = R_1 + R_2 + R_3$, $R_1 \in \cC_4^0$ and $R_3 \in \cC_4^0$. We deduce that $R_4 \in \cC_{N-8}^0$. Together with the fact that $N \ge 12$, this leads to the desire decomposition.

	\item For all $\alpha \in ( \alpha_{(\cF-4)}, \alpha_{(4)})$, it holds $G(\alpha) < m(\alpha)$. Note that this may only happen if $m(\alpha_{(4)}) > 0$, or $\cF < 4$, so that $\cG > N - 4 \ge 8$. It holds $G(\alpha_{(\cF-4)}) = 0$, so that $g(r)$ is null for $r < \alpha_{(\cF-4)}$. Let $\alpha_0$ be such that $\alpha_{(\cF-4)} < \alpha_0 < \alpha_{(4)}$. As
	\[
		\int_{\R} f = \cF > 4 \quad \text{and} \quad \int_{\alpha_0}^\infty = \cF - F(\alpha_0) < 4,
	\]
	there exists a non-decreasing function $\xi_1 \in C^\infty(\R)$ satisfying $\xi_1(x) = 1$ for $x \ge \alpha_0$ and such that
	\[
		\int_{\R} \xi_1 f = 4.
	\]
	Now, since $G(\alpha_{(4)}) < m(\alpha_{(4)}) = 8 - \cF$, it holds that
	\[
		\left\{ \begin{array}{l}
		\int_{\R} (1 - \xi_1) f \le F(\alpha_{(4)}) = 4  \\
		~\\
		\int_{R} (1 - \xi_1) f + \int_{\alpha_0}^\infty g \ge F(\alpha_{(\cF-4)}) + \cG - G(\alpha_{(4)}) > 4. 
		\end{array}\right.
	\]
	There exists a non-decreasing function $\xi_2 \in C^\infty(\R)$ satisfying $\xi_2 (x) = 0$ for $x \le \alpha_0$ and such that
	\[
		\int_{\R} (1 - \xi_1) f + \xi_2 g = 4.
	\]
	Note that $(1 - \xi_1) \xi_2 = 0$. Finally, we set
	\begin{align*} 
		R_1 & = \xi_1 R^\spindown \\
		R_2 & = (1-\xi_2) R^\spinup \\
		R_3 & = \xi_2 R^\spinup + (1 - \xi_1) R^\spindown.
	\end{align*}
	By construction, it holds that $R = R_1 + R_2 + R_3$, and that $R_1 \in \cC_4^0$ and $R_3 \in \cC_4^0$. We deduce $R_2 \in \cC_{N-8}^0$, and the result follows.

	\item For all $\alpha \in ( \alpha_{(\cF-4)}, \alpha_{(4)})$, it holds $\cG(\alpha) > M(\alpha)$. This case is similar than the previous one.
\end{itemize}

\bibliographystyle{unsrt.bst}
\bibliography{pureState_v4_ARXIV}

\end{document}